\newif\ifdraft\drafttrue
\newif\ifcolor\colortrue
\definecolor{pennblue}{cmyk}{1,.65,0,.3}
\definecolor{pennred}{cmyk}{0,1,.65,.34}
\definecolor{williamspurple}{RGB}{89,23,128}
\definecolor{williamsgold}{RGB}{255,213,0}
\newcommand{\statement}[3]{#1 : #2 \rightarrow #3}
\newcommand{\PMatch}[3]{\ensuremath{#1.#2=#3}}
\newcommand{\true}{\kw{true}}
\newcommand{\false}{\kw{false}}
\newcommand{\MyAnd}[2]{#1 \ \kw{and} \ #2}
\newcommand{\Or}[2]{#1 \ \kw{or} \ #2}
\newcommand{\Not}[1]{!~#1}
\newcommand{\Max}[2]{\kw{max}(#1,#2)}
\newcommand{\Min}[2]{\kw{min}(#1,#2)}
\definecolor{Gray}{gray}{0.9}
\newcommand{\superscript}[1]{\ensuremath{^{\textrm{#1}}}}
\def\wu{\superscript{*}}
\def\wg{\superscript{\dag}}
\begin{document}
%
% --- Author Metadata here ---
\conferenceinfo{WOODSTOCK}{'97 El Paso, Texas USA}
%\CopyrightYear{2007} % Allows default copyright year (20XX) to be over-ridden - IF NEED BE.
%\crdata{0-12345-67-8/90/01}  % Allows default copyright data (0-89791-88-6/97/05) to be over-ridden - IF NEED BE.
% --- End of Author Metadata ---

\title{ Merlin: A Language for Provisioning Network Resources }

\numberofauthors{1} 

\author{
{\large
Robert Soul\'{e}{\wu} \quad
Shrutarshi Basu{\wg} \quad
Parisa Jalili Marandi{\wu} \quad
Fernando Pedone{\wu}} \\[.25em]
{\large
Robert Kleinberg{\wg} \quad
Emin G\"{u}n Sirer{\wg} \quad
Nate Foster{\wg}}\\[.25em]
\large {\wu}University of Lugano \quad
\large {\wg}Cornell University\\[.25em]
%\large \mtt{\{souler,marandip,pedonef\}@usi.ch} \quad
%\large \mtt{\{basus,rdk,egs,jnfoster\}@cs.cornell.edu}
}

\date{}

\maketitle
\begin{abstract}
This paper presents Merlin, a new framework for managing resources in software-defined networks. With Merlin, administrators express high-level policies using programs in a declarative language. The language includes logical predicates to identify sets of packets, regular expressions to encode forwarding paths, and arithmetic formulas to specify bandwidth constraints. The Merlin compiler uses a combination of advanced techniques to translate these policies into code that can be executed on network elements including a constraint solver that allocates bandwidth using parameterizable heuristics. To facilitate dynamic adaptation, Merlin provides mechanisms for delegating control of sub-policies and for verifying that modifications made to sub-policies do not violate global constraints. Experiments demonstrate the expressiveness and scalability of Merlin on real-world topologies and applications. Overall, Merlin simplifies network administration by providing high-level abstractions for specifying network policies and scalable infrastructure for enforcing them.

\end{abstract}

%% \category{C.2.4}{Distributed Systems}{Network operating systems}
%% \category{D.3.2}{Language Classifications}{Specialized application languages}

%\terms{Theory}

%% \keywords{Software-defined networking, resource management, delegation, verification, Merlin.}

%% The OpenFlow protocol, for example, provides support for forwarding or
%% dropping packets, basic statistics gathering, and little else.  The
%% key benefit of software-defined networking platforms is that the
%% protocols provide the ``assembly language'' into which high-level
%% network programming languages can be compiled.

\section{Introduction}
\label{sec:introduction}

Network operators today must deal with a wide range of management
challenges from increasingly complex policies to a proliferation of
heterogeneous devices to ever-growing traffic demands.
Software-defined networking (SDN) provides tools that could be used to
address these challenges, but existing SDN APIs are either too
low-level or too limited in functionality to enable effective
implementation of rich network-wide policies. As a result, there is
widespread interest in academia and industry in higher-level
programming languages and ``northbound APIs'' that provide convenient
control over the complete set of resources in a network.

Unfortunately, despite several notable advances, there is still a
large gap between the capabilities of existing SDN APIs and the
realities of network management. Current programming languages focus
mostly on packet forwarding and largely ignore functionality
such as bandwidth and richer packet-processing functions that can only
be implemented on middleboxes, end hosts, or with custom
hardware~\cite{foster11,monsanto13,voellmy13,anderson14,flowlog14}. Network
orchestration frameworks provide powerful mechanisms that handle a
larger set of concerns including middlebox placement and
bandwidth~\cite{gember12,joseph08,qazi13,sekar12}, but they expose
APIs that are either extremely simple (e.g., sequences of middleboxes)
or not specified in detail. Overall, the challenges of managing
real-world networks using existing SDN APIs remain unmet.

This paper presents a new SDN programming language that is designed to
fill this gap. This language, called Merlin, provides a collection of
high-level programming constructs for (i) classifying packets; (ii)
controlling forwarding paths; (iii) specifying rich packet
transformations; and (iv) provisioning bandwidth in terms of maximum
limits and minimum guarantees. These features go far beyond what can
be realized just using SDN switches or with existing languages like
Frenetic~\cite{foster11}, Pyretic~\cite{monsanto12}, and
Maple~\cite{voellmy13}. As a result, implementing Merlin is
non-trivial because it involves determining allocations of limited
network-wide resources such as bandwidth---the simple compositional
translations used in existing SDN compilers cannot be readily extended
to handle the new features provided in Merlin.

Merlin uses a variety of compilation techniques to determine
forwarding paths, map packet transformations to network elements, and
allocate bandwidth. These techniques are based on a unified
representation of the physical network topology and the constraints
expressed by the policy---a logical topology. For traffic with
bandwidth guarantees, the compiler uses a mixed integer program
formulation to solve a variant of the multi-commodity flow constraint
problem. For traffic without bandwidth guarantees, Merlin leverages
properties of regular expressions and finite automata to efficiently
generate forwarding trees that respect the path constraints encoded in
the logical topology. Handling these two kinds of traffic separately
allows Merlin to provide a uniform interface to programmers while
reducing the size of the more expensive constraint problems that must
be solved. The compiler also handles generation of low-level
instructions for a variety of elements including switches,
middleboxes, and end hosts.

% In practice, we expect the set of traffic classes requiring bandwidth
% guarantees to be relatively small. For example, Google's use of SDNs
% identifies only 13 traffic classes~\cite{jain13}.  Partitioning the
% traffic reduces the size of the constraint problem that needs to be
% solved, improving the scalability of the compiler.

Although the configurations emitted by the Merlin compiler are static,
the system also incorporates mechanisms for handling dynamically
changing policies. Run-time components called \emph{negotiators}
communicate among themselves to dynamically adjust bandwidth
allocations and \emph{verify} that the modifications made by other
negotiators do not lead to policy violations. Again, the use of a
high-level programming language is essential, as it provides a
concrete basis for analyzing, transforming, and verifying policies.

We have built a working prototype of Merlin, and used it to implement
a variety of practical policies that demonstrate the expressiveness of
the language. These examples illustrate how Merlin supports a wide
range of network functionality including simple forwarding policies,
policies that require rich transformations usually implemented on
middleboxes such as load balancing and deep-packet inspection, and
policies that provide bandwidth guarantees. We have also implemented
negotiators that realize max-min fair sharing and additive-increase
multiplicative-decrease dynamic allocation schemes. Our experimental
evaluation shows that the Merlin compiler can quickly provision and
configure real-world datacenter and enterprise networks, and that
Merlin can be used to obtain better application performance for data
analytics and replication systems.

Overall, this paper makes the following contributions:
\begin{itemize}
\item The design of high-level network management abstractions realized
in an expressive policy language that models packet classification,
forwarding, transformation, and bandwidth.

\item Compilation algorithms based on a translation from
policies to constraint problems that can be solved using mixed integer
programming.

\item An approach for dynamically adapting policies using negotiators 
and accompanying verification techniques, made possible by the
language design.

\end{itemize}
The rest of this paper describes the design of the Merlin language
(\S\ref{sec:language}), compiler (\S\ref{sec:compiler}), and runtime
transformations (\S\ref{sec:transformations}). It then describes the
implementation (\S\ref{sec:implementation}) and presents the results
from our performance evaluation (\S\ref{sec:evaluation}).

\section{Language Design}
\label{sec:language}

\begin{figure}[t]
\(\begin{array}{r@{\;}c@{\;}l@{\quad}l}
\mathit{loc} & \in & \textit{Locations}\\
t & \in & \textit{Packet transformations}\\
pol & \bnf & [s_1; \dots; s_n], \phi & \text{Policies}\\ 
s & \bnf & \statement{id}{p}{r} & \text{Statements}\\ 
\phi & \bnf &\Max{e}{n} \mid \Min{e}{n}   & \text{Presburger Formulas}\\ 
& \mid & \MyAnd{\phi _1}{\phi _2} \mid \Or{\phi _1}{v_2} \mid
\,\Not{\phi _1}  & \\ 
e & \bnf & n \mid id \mid e + e &  \text{Bandwidth Terms}\\
a & \bnf & . \mid c \mid a\, a \mid a \texttt{|} a \mid a^* \mid \,\Not{a} &
\text{Path Expression}\\
p & \bnf &  m \mid \MyAnd{p_1}{p_2} \mid \Or{p_1}{p_2} \mid \,\Not{p_1}   & \text{Predicates}\\
  & \mid & \PMatch{h}{f}{n}  \mid  \true \mid \false & \\
c & \bnf & \mathit{loc} \mid t & \text{Path Element}\\
\end{array} 
\)
\caption{Merlin abstract syntax.}
\label{fig:syntax}
\end{figure}

The Merlin policy language is designed to give programmers a rich
collection of constructs that allow them to specify the intended
behavior of the network at a high level of abstraction. As an example,
suppose that we want to place a bandwidth cap on \textsc{ftp} control
and data transfer traffic, while providing a bandwidth guarantee to
\textsc{http} traffic. The program below realizes this specification
using a sequence of Merlin policy statements, followed by a logical
formula. Each statement contains a variable that tracks the amount of
bandwidth used by packets processed with that statement, a predicate
on packet headers that identifies a set of packets, and a regular
expression that describes a set of forwarding paths through the
network:
\begin{alltt}\small
  [ x : (eth.src = 00:00:00:00:00:01 \texttt{and} 
         eth.dst = 00:00:00:00:00:02 \texttt{and} 
         tcp.dst = 20) -> .* dpi .*
    y : (eth.src = 00:00:00:00:00:01 \texttt{and} 
         eth.dst = 00:00:00:00:00:02 \texttt{and} 
         tcp.dst = 21) -> .*
    z : (eth.src = 00:00:00:00:00:01 \texttt{and} 
         eth.dst = 00:00:00:00:00:02 \texttt{and} 
         tcp.dst = 80) -> .* dpi *. nat .* ],
    max(x + y,50MB/s) and min(z,100MB/s) 
\end{alltt}
The statement on the first line asserts that
\textsc{ftp} traffic from the host with \textsc{mac} address \mbox{{\tt \small
00:00:00:00:00:01}} to the host with \textsc{mac}
address \mbox{{\tt \small 00:00:00:00:00:02}} must travel along a path
that includes a packet processing function that performs deep-packet
inspection (\texttt{\small dpi}). The next two statements identify and
constrain \textsc{ftp} control and \textsc{http} traffic between the
same hosts respectively. Note that the statement for \textsc{ftp}
control traffic does not include any constraints on its forwarding
path, while the \textsc{http} statement includes both a \texttt{\small
dpi} and a \texttt{\small nat} constraint. The formula on the last
line declares a bandwidth cap (\texttt{\small max}) on
the \textsc{ftp} traffic, and a bandwidth guarantee (\texttt{\small
min}) for the \textsc{http} traffic. The rest of this section
describes the constructs used in this policy in detail.

\subsection{Syntax and semantics}

The syntax of the Merlin policy language is defined by the grammar in
Figure~\ref{fig:syntax}. A policy is a set of
\emph{statements}, each of which specifies the handling of a subset of traffic, 
together with a \emph{logical formula} that expresses a global
bandwidth constraint. For simplicity, we require that the statements
have disjoint predicates and together match all packets. In our
implementation, these requirements are enforced by a simple
pre-processor. Each policy statement comprises several components:
an \emph{identifier}, a \emph{logical predicate}, and a \emph{regular
expression}. The identifier provides a way to identify the set of
packets matching the predicate, while the regular expression specifies
the forwarding paths and packet transformations that should be applied
to matching packets. Together, these abstractions facilitate thinking
of the network as a ``big switch''~\cite{kang13}, while enabling
programmers to retain precise control over forwarding paths and
bandwidth usage.

\paragraph*{Logical predicates}
Merlin supports a rich predicate language for classifying packets.
Atomic predicates of the form $\PMatch{h}{f}{n}$ denote the set of
packets whose header field $h.f$ is equal to $n$. For instance, in the
example policy above, statement \texttt{\small z} contains the
predicate that matches packets with \texttt{\small eth} source
address \texttt{\small 00:00:00:00:00:01}, 
destination address 
\texttt{\small 00:00:00:00:00:02}, and \texttt{\small tcp} port \texttt{\small 80}. Merlin provides 
atomic predicates for a number of standard protocols including
Ethernet, \textsc{ip}, \textsc{tcp}, and \textsc{udp}, and a
special predicate for matching packet payloads. Predicates can also be
combined using conjunction (\texttt{\small and}), disjunction
(\texttt{\small or}), and negation (\texttt{\small !}).

\paragraph*{Regular expressions}
Merlin allows programmers to specify the set of allowed forwarding
paths through the network using regular expressions---a natural and
well-studied mathematical formalism for describing paths through a
graph (such as a finite state automaton or a network
topology). However, rather than matching strings of characters, as
with ordinary regular expressions, Merlin regular expressions match
sequences of network locations (including names of transformations, as
described below). The compiler is free to select any matching path for
forwarding traffic as long as the other constraints expressed by the
policy are satisfied. We assume that the set of network locations is
finite.  As with \textsc{posix} regular expressions, the dot symbol
(\texttt{\small .})  matches a single element of the set of all
locations.

\paragraph*{Packet transformations}
Merlin regular expressions may also contain names of packet-processing
functions that may transform the headers and contents of packets. Such
functions can be used to implement a variety of useful operations
including deep packet inspection, network address translation,
wide-area optimizers, caches, proxies, traffic shapers, and
others. The compiler determines the location where each function is
enforced, using a mapping from function names to possible locations
supplied as a parameter. The only requirements on these functions are
that they must take a single packet as input and generate zero or more
packets as output, and they must only access local state. In
particular, the restriction to local state allows the compiler to
freely place functions without having to worry about maintaining
global state. Merlin's notion of packet processing functions that can
be easily moved within the network is similar in spirit to network
function virtualization~\cite{nfv}.

\paragraph*{Bandwidth constraints}
Merlin policies use formulas in Presburger arithmetic (i.e,
first-order formulas with addition but without multiplication, to
ensure decidability) to specify constraints that either limit
(\texttt{\small max}) or guarantee (\texttt{\small min})
bandwidth. The addition operator can be used to specify an aggregate
cap on traffic, such as in the \texttt{\small max(x + y, 50MB/s)} term
from the running example. By convention, policies without a rate
clause are unconstrained---policies that lack a minimum rate are not
guaranteed any bandwidth, and policies that lack a maximum rate may
send traffic at rates up to line speed. Bandwidth limits and
guarantees differ from packet processing functions in one important
aspect: they represent an explicit allocation of global network
resources. Hence, additional care is needed when compiling them.

\paragraph*{Syntactic sugar}
In addition to the core constructs shown in Figure~\ref{fig:syntax},
Merlin also supports several forms of syntactic sugar that can
simplify the expression of complex policies. Notably, Merlin provides
set literals and functions for operating on and iterating over sets. For
example, the following policy,

\begin{alltt}\small
 srcs := \{00:00:00:00:00:01\}
 dsts := \{00:00:00:00:00:02\}
 foreach (s,d) in cross(srcs,dsts):
   tcp.dst = 80 -> 
   ( .* nat .* dpi .*) at max(100MB/s)
\end{alltt}

\noindent
is equivalent to statement \texttt{\small z} from the example. The
sets \texttt{\small srcs} and \texttt{\small dsts} refer to singleton sets of
hosts. The \texttt{\small cross} operator takes the cross product of these
sets. The \texttt{\small foreach} statement iterates over the resulting set,
creating a predicate from the src \texttt{\small s}, destination \texttt{\small d},
and term \texttt{\small tcp.dst = 80}.

\paragraph*{Summary}
Overall, Merlin's policy language enables direct expression of
high-level network policies. Crucial to its design, however, is that
policies can be distilled into their component constructs, which can
then be distributed across many devices in the network to collectively
enforce the global policy. The subsequent sections present these
distribution and enforcement mechanisms in detail.

\section{Compiler}
\label{sec:compiler}

The Merlin compiler performs three essential tasks: (i) it
translates global policies into locally-enforceable policies; (ii) it
determines the paths used to carry traffic across the network, places
packet transformations on middleboxes and end hosts, and allocates
bandwidth to individual flows; and (iii) it generates low-level
instructions for network devices and end hosts. To do this, the
compiler takes as inputs the Merlin policy, a representation of the
physical topology, and a mapping from transformations to possible
placements, and builds a logical topology that incorporates the
structure of the physical topology as well as the constraints encoded
in the policy. It then analyzes the logical topology to determine
allocations of resources and emits low-level configurations for
switches, middleboxes, and end hosts.

\begin{figure}[t]
\centerline{\includegraphics[width=0.9\columnwidth]{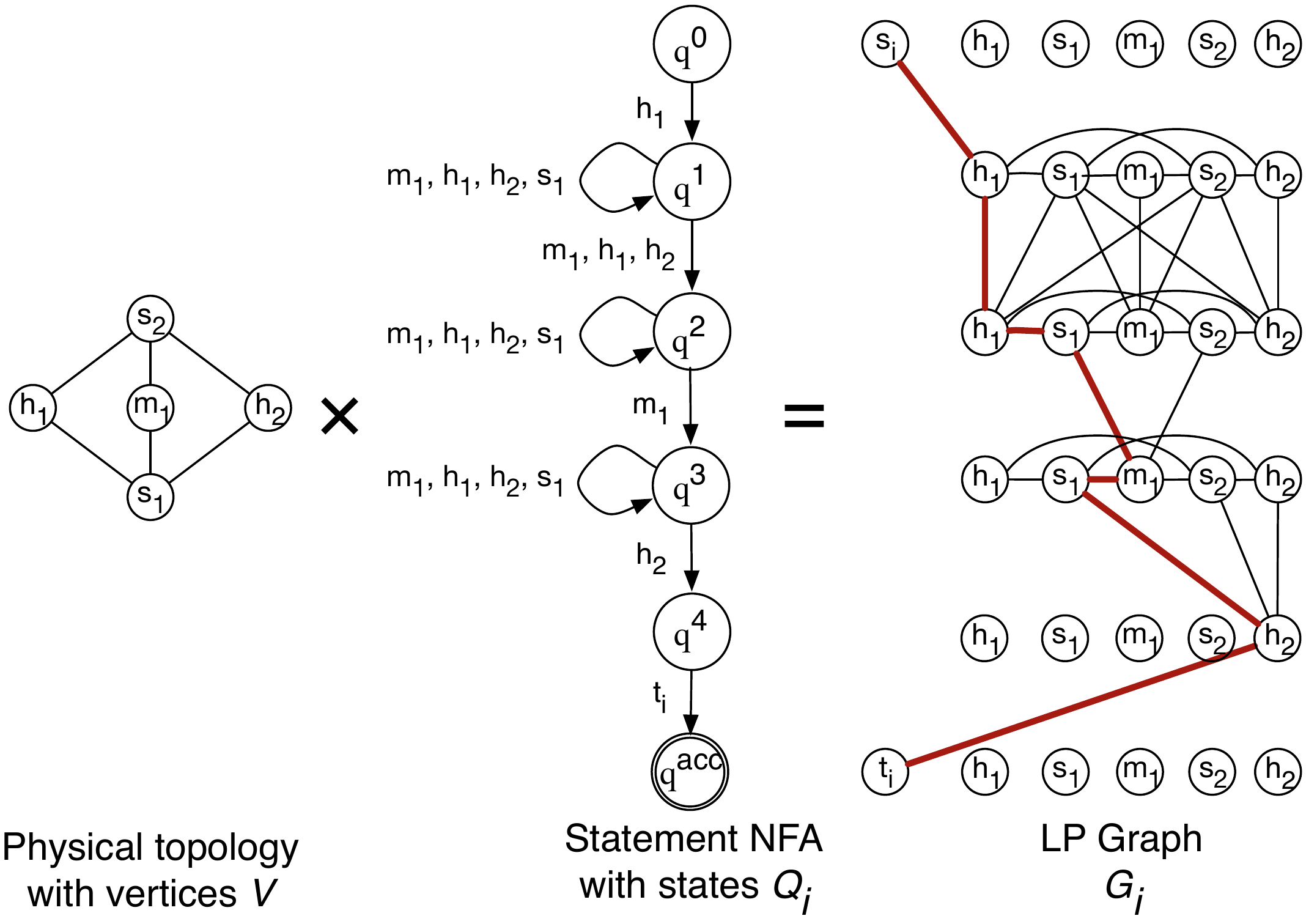}}
\caption{Logical topology for the example
  policy. The thick, red path illustrates a solution.}
\label{fig:ilp}
\end{figure}

\subsection{Localization}

Merlin's Presburger arithmetic formulas are an expressive way to
declare bandwidth constraints, but actually implementing them leads to
several challenges: aggregate guarantees can be enforced using shared
quality-of-service queues on switches, but aggregate bandwidth limits
are more difficult, since they require distributed state in
general. To solve this problem, Merlin adopts a pragmatic
approach. The compiler first rewrites the formula so that the
bandwidth constraints apply to packets at a single location. Given a
formula with one term over $n$ identifiers, the compiler produces a
new formula of $n$ local terms that collectively imply the
original. By default, the compiler divides bandwidth equally among the
local terms, although other schemes are permissible. For example,
the \texttt{\small max} term in the running example would be localized
to \texttt{\small max(x, 25MB/s) and max(y, 25MB/s)}. Rewriting policies in
this way involves an inherent tradeoff: localized enforcement
increases scalability, but risks underutilizing resources if the
static allocations do not reflect actual usage. In
Section~\ref{sec:transformations}, we describe how Merlin navigates
this tradeoff via a run-time mechanism, called \emph{negotiators},
that can dynamically adjust allocations.

\subsection{Provisioning for Guaranteed Rates}
\label{sec:constraint}

The most challenging aspect of the Merlin compilation process is
provisioning bandwidth for traffic with guarantees. To do this, the
compiler encodes the input policy and the network topology into a
constraint problem whose solution, if it exists, can be used to
determine the configuration of each network device.

\newcommand{\grph}{{\mathcal{G}}}
\newcommand{\autom}{{\mathcal{M}}}
\newcommand{\gprod}{{\circ}}
\newcommand{\states}{{\mathcal{Q}}}
\newcommand{\nr}{{r}}
\newcommand{\nrmax}{{\nr_{\mathrm{max}}}}
\newcommand{\unrmax}{{R_{\mathrm{max}}}}
\newcommand{\rmini}{{r^i_{\mathrm{min}}}}

\paragraph*{Logical topology}
Recall that a statement in the Merlin language contains a
regular expression, which constrains the set of forwarding paths and
packet processing functions that may be used to satisfy the
statement. To facilitate the search for routing paths that satisfy
these constraints, the compiler represents them internally as a
directed graph $\grph$ whose paths correspond to physical network
paths that respect the path constraints of a single statement. The
overall graph $\grph$ for the policy is a union of disjoint components
$\grph_i$, one for each statement $i$.

The regular expression $a_i$ in statement $i$ is over the set of
locations and packet processing functions. The first step in the
construction of $\grph_i$ is to map $a_i$ into a regular expression
$\bar{a}_i$ over the set of locations using a simple substition: for
every occurrence of a packet transformation, we substitute the union
of all locations associated with that function. (Recall that the
compiler takes an auxiliary input specifying this mapping from
functions to locations.) For example, if \texttt{\small h1}, \texttt{\small h2},
and \texttt{\small m1} are the three locations capable of running network
address translation, then the regular expression {\tt .* nat .*} would
be transformed to {\tt \small .* (h1|h2|m1) .*}. The next step is to
transform the regular expression $\bar{a}_i$ into a nondeterministic
finite automaton (\textsc{nfa}), denoted $\autom_i$, that accepts the
set of strings in the regular language given by $\bar{a}_i$.

Letting $L$ denote the set of locations in the physical network and
$\states_i$ denote the state set of $\autom_i$, the vertex set of
$\grph_i$ is the Cartesian product $L \times \states_i$ together with
two special vertices $\{s_i,t_i\}$ that serve as a universal source
and sink for paths representing statement $i$ respectively. The graph
$\grph_i$ has an edge from $(u,q)$ to $(v,q')$ if and only if: {\bf
(i)} $u=v$ or $(u,v)$ is an edge of the physical network, and {\bf
(ii)} $(q,q')$ is a valid state transition of $\autom_i$ when
processing $v$. Likewise, there is an edge from $s_i$ to $(v,q')$ if
and only if $(q^0,q')$ is a valid state transition of $\autom_i$ when
processing $v$, where $q^0$ denotes the start state of
$\autom_i$. Finally, there is an edge from $(u,q)$ to $t_i$ if and
only if $q$ is an accepting state of $\autom_i$. Paths in $\grph_i$
correspond to paths in the physical network that satisfy the path
constraints of statement $i$:

\begin{lemma} \label{l:placement}
A sequence of locations $u_1,u_2,\ldots,u_k$ satisfies the constraint
described by regular expression $\bar{a}_i$ if and only if $\grph_i$
contains a path of the form \\$s_i, (u_1,q_1), (u_2,q_2), \ldots,
(u_k,q_k), t_i$ for some state sequence $q_1,\ldots,q_k$.
\end{lemma}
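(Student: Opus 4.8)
The plan is to prove the biconditional by unwinding the definition of the edges of $\grph_i$ and recognizing that $\grph_i$ is, in essence, a synchronized product of the physical network (augmented with self-loops, to account for the $u=v$ clause of condition (i)) and the automaton $\autom_i$. The one external fact I would invoke is that $\autom_i$ was constructed to accept exactly the language $L(\bar{a}_i)$, so that ``$u_1,\dots,u_k$ satisfies the constraint described by $\bar{a}_i$'' is synonymous with ``$u_1,\dots,u_k$ is a physically realizable walk that is accepted by $\autom_i$,'' i.e.\ there is a run $q^0,q_1,\dots,q_k$ of $\autom_i$ on input $u_1,\dots,u_k$ ending in an accepting state $q_k$, with every pair of consecutive locations equal or adjacent in the physical network. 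The proof then amounts to matching such runs with the $\grph_i$-paths of the stated form.

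For the reverse direction, I would start from a path $s_i,(u_1,q_1),\dots,(u_k,q_k),t_i$ and read off the second coordinates $q_1,\dots,q_k$. The edge $s_i\to(u_1,q_1)$ guarantees, by the defining condition for edges out of $s_i$, that $(q^0,q_1)$ is a valid transition of $\autom_i$ on $u_1$. Each internal edge $(u_j,q_j)\to(u_{j+1},q_{j+1})$ yields, via condition (ii), that $(q_j,q_{j+1})$ is a valid transition on $u_{j+1}$, and via condition (i) that $u_j$ and $u_{j+1}$ are equal or physically adjacent. Finally the edge $(u_k,q_k)\to t_i$ forces $q_k$ to be accepting. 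Assembling these observations shows that $q^0,q_1,\dots,q_k$ is an accepting run of $\autom_i$ on $u_1,\dots,u_k$ traced along a physical walk, which is exactly what it means for the sequence to satisfy the constraint.

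For the forward direction I would reverse this reasoning. Given that $u_1,\dots,u_k$ satisfies $\bar{a}_i$, I would fix an accepting run $q^0,q_1,\dots,q_k$ of $\autom_i$ (which exists by the definition of acceptance) and take the candidate path $s_i,(u_1,q_1),\dots,(u_k,q_k),t_i$. Then I would verify each edge in turn: the $s_i$ edge from the fact that $(q^0,q_1)$ is a transition on $u_1$; each $(u_j,q_j)\to(u_{j+1},q_{j+1})$ from condition (ii) (supplied by the run) together with condition (i) (the sequence is a physical walk); and the final $t_i$ edge from $q_k$ being accepting. Since every required edge is present, the path lies in $\grph_i$.

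The genuinely delicate points, rather than any deep argument, are bookkeeping at the boundaries and the treatment of nondeterminism. Because $\autom_i$ is nondeterministic, neither the run nor the state sequence is unique, so both sides must be read existentially (``for some $q_1,\dots,q_k$''), and I would be careful to pair a single run with a single path rather than claim a canonical correspondence. I would also take care that the $s_i$ edge implicitly encodes the start state $q^0$ while the $t_i$ edge encodes acceptance, so that these two endpoints together supply precisely the initial and final constraints of an NFA run that the internal edges do not. The empty sequence ($k=0$) is a minor boundary case: it is covered only if $q^0$ is itself accepting and the construction admits a direct $s_i\to t_i$ edge, so I would either note this explicitly or observe that the lemma concerns genuine location sequences with $k\ge 1$.
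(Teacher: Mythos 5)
Your proof is correct and follows essentially the same route as the paper's: both unwind the edge definitions of $\grph_i$ to establish that paths of the stated form correspond exactly to accepting runs of $\autom_i$ traced along physical walks, and then invoke the fact that $\autom_i$ accepts precisely the language of $\bar{a}_i$. Your explicit two-directional bookkeeping and the remark about the $k=0$ boundary case are just a more detailed rendering of the argument the paper gives in compressed form.
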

\begin{proof}
The construction of $\grph_i$ ensures that \\$s_i, (u_1,q_1),
(u_2,q_2), \ldots, (u_k,q_k), t_i$ is a path if and only if {\bf (i)}
the sequence $u_1,\ldots,u_k$ represents a path in the physical
network (possibly with vertices of the path repeated more than once
consecutively in the sequence), and {\bf (ii)} the automaton
$\autom_i$ has an accepting computation path for $u_1,\ldots,u_k$ with
state sequence $q^0,q^1,\ldots,q^k$.  The lemma follows from the fact
that a string belongs to the regular language defined by $\bar{a}_i$
if and only if $\autom_i$ has a computation path that accepts that
string.
\end{proof}

Figure~\ref{fig:ilp} illustrates the construction of the graph
$\grph_i$ for a statement with path expression {\tt \small h1 .* dpi
.* nat .* h2}, on a small example network. We assume that deep packet
inspection ({\tt \small dpi}) can be performed at \texttt{\small
h1}, \texttt{\small h2}, or \texttt{\small m1}, whereas network
address translation ({\tt \small nat}) can only be performed
at \texttt{\small m1}. Paths matching the regular expression can be
``lifted'' to paths in $\grph_i$; the thick, red path in the figure
illustrates one such lifting.  Notice that the physical network also
contains other paths such as \texttt{\small h1}, \texttt{\small
s1}, \texttt{\small h2} that do not match the regular
expression. These paths do not lift to any path in $\grph_i$. For
instance, focusing attention on the rows of nodes corresponding to
states $q^2$ and $q^3$ of the \textsc{nfa}, one sees that all edges
between these two rows lead into node ($m_1$,$q^3$). This, in turn,
means that any path that avoids \texttt{\small m1} in the physical
network cannot be lifted to an $s_i$-$t_i$ path in the graph
$\grph_i$.

\begin{figure}[t!]
 \centerline{\begin{tabular}{c@{ }c@{ }c}
  \includegraphics[width=0.25\columnwidth]{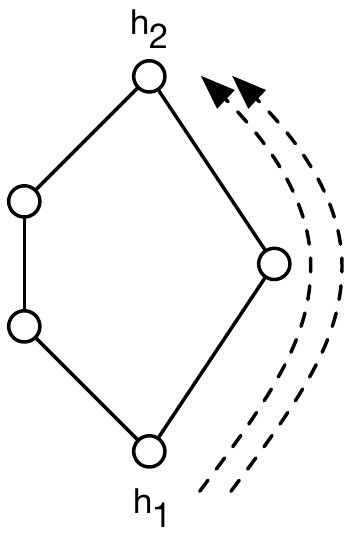} &
  \includegraphics[width=0.25\columnwidth]{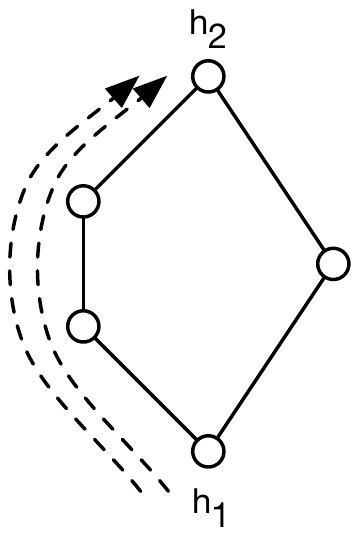} &
  \includegraphics[width=0.25\columnwidth]{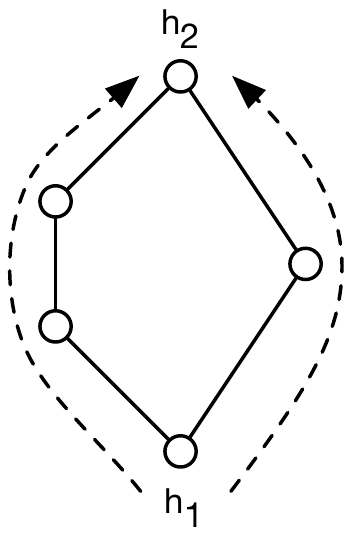} \\
 (a) Shortest-Path & (b) Min-Max Ratio & (c) Min-Max Reserved \\
 \end{tabular}}
  \caption{Path selection heuristics.}
  \label{fig:heuristics}
 \end{figure}

\paragraph*{Path selection}
Next, the compiler determines a satisfying assignment of paths that
respect the bandwidth constraints encoded in the policy. The problem
bears a similarity to the well-known \emph{multi-commodity flow
problem}~\cite{ahuja93}, with two additional types of constraints: (i)
{\em integrality constraints} demand that only one path may be
selected for each statement, and (ii) {\em path constraints} are
specified by regular expressions, as discussed above. To incorporate
path constraints, we formulate the problem in the graph $\grph
= \bigcup_i \grph_i$ described above, rather than in the physical
network itself. Incorporating integrality constraints into
multi-commodity flow problems renders them
\textsc{np}-complete in the worst case, but a number of approaches
have been developed over the years for surmounting this problem,
ranging from approximation
algorithms~\cite{chakrabarti02,chuzhoy12,dinitz99,kleinberg-unsplittable,kolliopoulos01},
to specialized algorithms for topologies such as
expanders~\cite{broder97,frieze98,kleinberg96} and planar
graphs~\cite{okamura-seymour}, to the use of mixed-integer
programming~\cite{barnhart00}. Our current implementation adopts the
latter technique.

Our mixed-integer program (\textsc{mip}) has one $\{0,1\}$-valued
decision variable $x_e$ for each edge $e$ of $\grph$; selecting a
route for each statement corresponds to selecting a path from $s_i$ to
$t_i$ for each $i$ and setting $x_e=1$ on the edges of those paths,
$x_e=0$ on all other edges of $\grph$. These variables are required to
satisfy the flow conservation equations
\begin{align}
   \forall v \in \grph 
   \sum_{e \in \delta^+(v)} x_e 
       -  
   \sum_{e \in \delta^-(v)} x_e 
       = 
   \begin{cases}
        1 & \mbox{\small if $v = s_i$} \\
        -1 & \mbox{\small if $v = t_i$} \\
        0 & \mbox{\small otherwise}
   \end{cases}
\label{eq:flowcons}
\end{align}
where $\delta^+(v), \, \delta^-(v)$ denote the sets of edges exiting
and entering $v$, respectively. For bookkeeping purposes
the \textsc{mip} also has real-valued variables $\nr_{uv}$ for each
physical network link $(u,v)$, representing what fraction of the
link's capacity is reserved for statements whose assigned path
traverses $(u,v)$. Finally, there are variables $\nrmax$ and $\unrmax$
representing the maximum fraction of any link's capacity devoted to
reserved bandwidth, and the maximum net amount of reserved bandwidth
on any link, respectively. The equations and inequalities pertaining
to these additional variables can be written as follows.  For any
statement $i$, let $\rmini$ denote the minimum amount of bandwidth
guaranteed in the rate clause of statement $i$. ($\rmini=0$ if the
statement contains no bandwidth guarantee.)  For any physical link
$(u,v)$, let $c_{uv}$ denote its capacity and let $E_i(u,v)$ denote
the set of all edges of the form $((u,q),(v,q'))$ or $((v,q),(u,q'))$
in $\grph_i$.
\begin{align}
  \forall (u,v) \; \; 
  \nr_{uv} c_{uv}
     &=
  \sum_i \sum_{e \in E_i(u,v)} \rmini x_e 
      \label{eq:nr} \\
  \forall (u,v) \; \; 
  \nrmax & \geq \nr_{uv} 
      \label{eq:nrmax} \\
  \forall (u,v) \; \;
  \unrmax & \geq \nr_{uv} c_{uv} 
      \label{eq:unrmax} \\
  \nrmax & \leq 1 
      \label{eq:capacity}
\end{align}
Constraint~\ref{eq:nr} defines $\nr_{uv}$ to be the fraction of
capacity on link $(u,v)$ reserved for bandwidth guarantees.
Constraints~\ref{eq:nrmax} and~\ref{eq:unrmax} ensure that $\nrmax$
(respectively, $\unrmax$) is at least the maximum fraction of capacity
reserved on any link (respectively, the maximum net amount of
bandwidth reserved on any link). Constraint~\ref{eq:capacity} ensures
that the route assignment will not exceed the capacity of any link, by
asserting that the fraction of capacity reserved on any link cannot
exceed $1$.

\paragraph*{Path selection heuristics}
In general, there may be multiple assignments that satisfy the path
and bandwidth constraints. To indicate the preferred assignment,
programmers can invoke Merlin with one of three optimization criteria:
\begin{itemize} 
\item {\em Weighted shortest path:} minimizes the total
number of hops in assigned paths, weighted by bandwidth
guarantees: \(\min \sum_i \sum_{u \neq v} \sum_{e \in E_i(u,v)} \rmini
x_e\).  This criterion is appropriate when the goal is to minimize
latency as longer paths tend to experience increased latency.
\item {\em Min-max ratio:} minimizes the maximum fraction of capacity
 reserved on any link (i.e., $\nrmax$). This criterion is appropriate
when the goal is to balance load across the network links.
\item {\em Min-max reserved:} minimizes the maximum amount of bandwidth 
reserved on any link (i.e., $\unrmax$).  This criterion is appropriate
  when the goal is to guard against failures, since it limits the
  maximum amount of traffic that may disrupted by a single link failure.
\end{itemize}
The differences between these heuristics are illustrated in
Figure~\ref{fig:heuristics} which depicts a simple network with
hosts \texttt{\small h1} and \texttt{\small h2} connected by a pair of disjoint
paths. The left path comprises three edges of capacity $400$MB/s. The
right path comprises two edges of capacity $100$MB/s. The figure shows
the paths selected for two statements each requesting a bandwidth
guarantee of $50$MB/s. Depending on the heuristic, the \textsc{mip}
solver will either select two-hop paths (weighted shortest path),
reserve no more than $25$\% of capacity on any link (min-max ratio),
or reserve no more than $50$MB/s on any link (min-max reserved).

\subsection{Provisioning for Best-Effort Rates}

For traffic requiring only best-effort rates, Merlin does not need to
solve a constraint problem. Instead, the compiler only needs to
compute sink-trees that obey the path constraints expressed in the
policy. A sink-tree for a particular network node forwards traffic
from elsewhere on the network to that node. Merlin does this by
computing the cross product of the regular expression NFA and the
network topology representation, as just described, and then
performing a breath-first search over the resulting graph. To further
improve scalability, the compiler uses a small optimization: it works
on a topology that only includes switches, and computes a sink tree
for each egress switch. The compiler adds instructions to forward
traffic from the egress switches to the hosts during code
generation. This allows the BFS to be computed in $O(|V||E|)$, where
$|V|$ is the number of switches rather than the number of hosts.

\subsection{Code Generation}
\label{sec:enforcement}

Next, the Merlin compiler generates code to enforce the policy using
the devices available in the network. The actual code is determined
both by the requested functionality and the type of target device. For
basic forwarding and bandwidth guarantees, Merlin generates
instructions for OpenFlow switches and controllers~\cite{mckeown08}
that install forwarding rules and configure port queues. For packet
transformations such as deep packet inspection, network address
translation, and intrusion detection, Merlin generates scripts to
install and configure middleboxes, such as
Click~\cite{kohler00}. Traffic filtering and rate limiting are
implemented by generating calls to the standard Linux
utilities~\texttt{\small iptables}. Of course, other approaches are
possible. For example, Merlin could implement packet transformations
by using Puppet~\cite{puppet} to provision virtual machines that
implement those transformations.

Because Merlin controls forwarding paths but also supports middleboxes
that may modify headers (such as \textsc{nat} boxes), the compiler
needs to use a forwarding mechanism that is robust to changes in
packet headers. Our current implementation uses \textsc{vlan} tags to
encode paths to destination switches, one tag per sink tree. All
packets destined for that tree's sink are tagged with a tag
when they enter the network. Subsequent switches simply examine the
tag to determine the next hop. At the egress switch, the tag is
stripped off and a unique host identifier (e.g., the \textsc{mac}
address) is used to forward traffic to the appropriate host. This
approach is similar to the technique used in
FlowTags~\cite{fayazbakhsh14}.

Merlin can provide greater flexibility and expressiveness by directly
generating packet-processing code, which can be executed by an
interpreter running on end hosts or on middleboxes. We have also built
a prototype that runs as a Linux kernel module and uses the
\texttt{\small netfilter} callback functions to access packets on the network
stack. The interpreter accepts and enforces programs that can filter
or rate limit traffic using a richer set of predicates than those
offered by \texttt{\small iptables}. It is designed to have minimal
dependencies on operating system services in order to make it portable
across different systems. The current implementation requires only
about a dozen system calls to be exported from the operating system to
the interpreter. In on-going work, we are exploring additional
functionality, with the goal or providing a general runtime as the
target for the Merlin complier. However, using end hosts assumes a
trusted deployment in which all host machines are under administrative
control. An interesting, but orthogonal, problem is to deploy Merlin
in an untrusted environment.  Several techniques have been proposed to
verify that an untrusted machine is running certain software. Notable
examples include proof carrying code~\cite{necula97}, and
\textsc{tpm}-based attestations~\cite{dixon11,sirer11}.

Merlin can be instantiated with a variety of backends to capitalize on
the capabilities of the available devices in the network. Although the
expressiveness of policies is bounded by the capabilities of the
devices, Merlin provides a unified interface for programming them.

\section{Dynamic Adaptation}
\label{sec:transformations}

The Merlin compiler described in the preceding section translates
policies into static configurations. Of course, these static
configurations may under-utilize resources, depending on how traffic
demands evolve over time. Moreover, in a shared environment, network
tenants may wish to customize global policies to suit their own
needs---e.g., to add additional security constraints. 

To allow for the dynamic modification of policies, Merlin uses small
run-time components called \emph{negotiators}. Negotiators are policy
transformers and verifiers---they allow policies to be delegated to
tenants for modification and they provide a mechanism for verifying
that modifications made by tenants do not lead to violations of the
original global policy. Negotiators depend critically on Merlin's
language-based approach. The same abstractions that allow policies to
be mapped to constraint problems (i.e., predicates, regular
expressions, and explicit bandwidth reservations), make it easy to
support \textit{verifiable} policy transformations.

Negotiators are distributed throughout the network in a tree, forming
a hierarchical overlay over network elements. Each negotiator is
responsible for the network elements in the subtree for which it is
the root. Parent negotiators impose policies on their
children. Children may refine their own policies, as long as the
refinement implies the parent policy. Likewise, siblings may
renegotiate resource assignments cooperatively, as long as they do not
violate parent policies. Negotiators communicate amongst themselves to
dynamically adjust bandwidth allocations to fit particular deployments
and traffic demands.

\subsection{Transformations}
\label{sec:delegation}

With negotiators, tenants can transform global network policies by
\emph{refining} the delegated policies to suit their own demands. 
Tenants may modify policies in three ways: (i) policies may be refined
with respect to packet classification; (ii) forwarding paths may be
further constrained; and (iii) bandwidth allocations may be revised.

\paragraph*{Refining policies}
Merlin policies classify packets into sets using predicates that
combine matches on header fields using logical operators. These sets
can be refined by introducing additional constraints to the original
predicate. For example, a predicate for matching all
\textsc{tcp} traffic:
\begin{alltt}\small
   ip.proto = tcp
\end{alltt}
can be partitioned into ones that match \textsc{http} traffic and all
other traffic:
\begin{alltt}\small
   ip.proto = tcp and tcp.dst = 80
   ip.proto = tcp and tcp.dst != 80
\end{alltt}
\noindent
The partitioning must be total---all packets identified by the
original policy must be identified by the set of new policies.

\paragraph*{Constraining paths}
Merlin programmers declare path constraints using regular expressions
that match sequences of network locations or packet
transformations. Tenants can refine a policy by adding addition
constraints to the regular expression. For example, an expression that
says all packets must go through a traffic logger (\textsc{log})
function:
\begin{alltt}\small
   .* log .*
\end{alltt}
can be modified to say that the traffic must additionally pass through
a \textsc{dpi} function:
\begin{alltt}\small
   .* log .* dpi .*
\end{alltt}

\paragraph*{Re-allocating bandwidth}
Merlin's limits (\texttt{\small max}) and guarantees (\texttt{\small min}) constrain
allocations of network bandwidth. After a policy has been refined,
these constraints can be redistributed to improve utilization. The
requirement for a valid transformation is that the sum of the new
allocations must not exceed the original allocation.

\paragraph*{Example} 
As an example that illustrates the use of all three transformations,
consider the following policy, which caps all traffic between two
hosts at 100MB/s:

\begin{alltt}\small
   [x : (ip.src = 192.168.1.1 and
         ip.dst = 192.168.1.2) -> .*],
   max(x, 100MB/s)
\end{alltt}

\noindent
This policy could be modified as follows:

\begin{alltt}\small
   [x : (ip.src = 192.168.1.1 and
         ip.dst = 192.168.1.2 and
         tcp.dst = 80) -> .* log .*],
   [y : (ip.src = 192.168.1.1 and
         ip.dst = 192.168.1.2 and
         tcp.dst = 22) -> .* ],
   [z : (ip.src = 192.168.1.1 and
         ip.dst = 192.168.1.2 and
         !(tcpDst=22|tcpDst=80)) -> .* dpi .*],
   max(x, 50MB/s)
   and max(y, 25MB/s)
   and max(z, 25MB/s)
\end{alltt}

\noindent
It gives $50$MB/s to \textsc{http} traffic with the constrain that it
passes through a \texttt{\small log} that monitors all web requests, $25$MB/s
to \textsc{ssh} traffic, and $25$MB/s to the remaining traffic, which
must flow through a \texttt{\small dpi} box.

\subsection{Verification}

Allowing tenants to make arbitrary modifications to policies would not
be safe. For example, a tenant could lift restrictions on forwarding
paths, eliminate transformations, or allocate more bandwidth to their
own traffic---all violations of the global policy set down by the
administrator. Fortunately, Merlin negotiators can leverage the policy
language representation to check policy inclusion, which can be used
to establish the correctness of policy transformations implemented by
untrusted tenants.

Intuitively, a valid refinement of a policy is one that makes it only
more restrictive. To verify that a policy modified by a tenant is a
valid refinement of the original, the negotiator simply has to check
that for every statement in the original policy, the set of paths
allowed for matching packets in the refined policy is included in the
set of paths in the original, and the bandwidth constraints in
the refined policy imply the bandwidth constraints in the
original. These conditions can be decided using a simple algorithm
that performs a pair-wise comparison of all statements in the original
and modified policies, (i) checking for language
inclusion~\cite{hopcroft79} between the regular expressions in
statements with overlapping predicates, and (ii) checking that the sum
of the bandwidth constraints in all overlapping predicates implies the
original constraint.

\subsection{Adaptation}

Bandwidth re-allocation does not require recompilation of the global
policy, and can thus happen quite rapidly. As a proof-of-concept, we
implemented negotiators that can provide both min-max fair sharing and
additive-increase, multiplicative decrease allocation schemes. These
negotiators allow traffic policies to change with dynamic workloads,
while still obeying the overall static global policies. Changes in
path constraints require global recompilation and updating forwarding
rules on the switches, so they incur a greater overhead. However, we
believe these changes are likely to occur less frequently than changes
to bandwidth allocations.

 \section{Implementation}
\label{sec:implementation}

We have implemented a full working prototype of the Merlin system in
OCaml and C. Our implementation uses the Gurobi
Optimizer~\cite{gurobi} to solve constraints, the Frenetic
controller~\cite{frenetic-ocaml13} to install forwarding rules on
OpenFlow switches, the Click router~\cite{kohler00} to manage software
middleboxes, and the \texttt{\small ipfilters} and \texttt{\small tc} utilities on
Linux end hosts. Note that the design of Merlin does not depend on
these specific systems. It would be easy to instantiate our design
with other systems, and our implementation provides a clean interface
for incorporating additional backends.

Our implementation of Merlin negotiator and verification mechanisms
leverages standard algorithms for transforming and analyzing
predicates and regular expressions. To delegate a policy, Merlin
simply intersects the predicates and regular expressions in each
statement the original policy to project out the policy for the
sub-network. To verify implications between policies, Merlin uses the
Z3 SMT solver~\cite{demoura08} to check predicate disjointness, and
the Dprle library~\cite{dprle} to check inclusions between regular
expressions.

\begin{figure}[t]
\centerline{\includegraphics[trim=7mm 7mm 7mm 2cm, clip=true,width=\columnwidth]{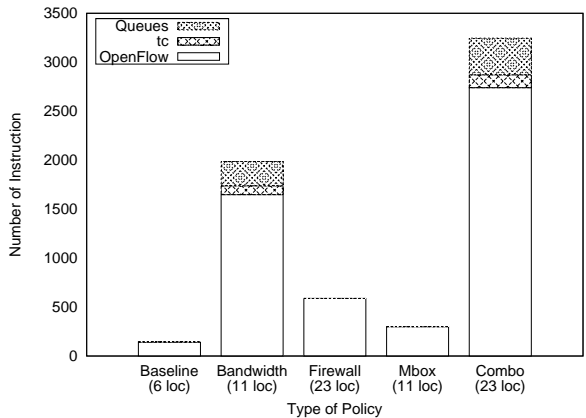}}
\vspace{-7mm}
\caption{Merlin expressiveness, measured using policies for the Stanford campus network topology.}
\label{fig:stanford}
\end{figure}

\section{Evaluation}
\label{sec:evaluation}

To evaluate Merlin, we investigated three main issues: (i) the
expressiveness of the Merlin policy language, (ii) the ability of
Merlin to improve end-to-end performance for applications, and (iii)
the scalability of the compiler and negotiator components with respect
to network and policy size.
We used two testbeds in our evaluation. Most experiments were run on a
cluster of Dell r720 PowerEdge servers with two $8$-core 2.7GHz Intel
Xeon processors, 32GB RAM, and four 1GB NICs. The Ring Paxos
experiment (\S\ref{sec:management}) was conducted on a cluster of
eight HP SE1102 servers equipped with two quad-core Intel Xeon L5420
processors running at 2.5 GHz, with 8 GB of RAM and two 1GB NICs. Both
clusters used a Pica8 Pronto 3290 switch to connect the machines. To
test the scalability we ran the compiler and negotiator frameworks on
various topologies and policies.

Overall, our experimental evaluation shows that Merlin can quickly
provision and configure real-world datacenter and enterprise networks,
that Merlin can be used to obtain better performance for big-data
processing applications and replication systems, and that Merlin
allows administrators to succinctly express network policies.

\subsection{Expressiveness.}

To explore the expressiveness of the Merlin policy languages, we built
several network policies for the $16$-node Stanford core campus
network topology~\cite{stanford}. We created $24$ subnets and then
implemented a series of policies in Merlin, and compared the sizes
of the Merlin source policies and the outputs generated by the
compiler. This comparison measures the degree to which Merlin is able
to abstract away from hardware-level details and provide effective
constructs for managing a real-world network.

The Merlin policies we implemented are as follows:

\begin{enumerate}\setlength{\itemsep}{0pt}

\item \emph{All-pairs connectivity}. This policy implements basic 
  forwarding between all pairs of hosts and provides a baseline
  measurement of the number of low-level instructions that would be
  needed in almost any non-trivial application. The Merlin policy is
  only $6$ lines long and compiles to $145$ OpenFlow rules.
\item \emph{Bandwidth caps and guarantee}. This policy augments the
  basic connectivity by providing $10$\% of traffic classes a
  bandwidth guarantee of $1$Mbps and a cap of $1$Gbps. Such a
  guarantee would be useful, for example, to prioritize emergency
  messages sent to students. This policy required $11$ lines of Merlin
  code, but generates over $1600$ OpenFlow rules, $90$ \textsc{tc}
  rules and $248$ queue configurations. The number of OpenFlow rules
  increased dramatically due to the presence of the bandwidth
  guarantees which required provisioning separate forwarding paths for
  a large collection of traffic classes.
\item \emph{Firewall}. This policy assumes the presence of a
  middlebox that filters incoming web traffic connected to the network
  ingress switches. The baseline policy is altered to forward all
  packets matching a particular pattern (e.g., \texttt{\small tcp.dst = 80})
  through the middlebox. This policy requires $23$ lines of Merlin
  code, but generates over $500$ OpenFlow instructions.
\item \emph{Monitoring middlebox}. This policy
  attaches middleboxes to two switches and partitions the hosts into
  two sets of roughly equal size. Hosts connected to switches in the
  same set may send traffic to each other directly, but traffic
  flowing between sets must be passed through a middlebox. This policy
  is useful for filtering traffic from untrusted sources, such as
  student dorms. This policy required $11$ lines of Merlin code but
  generates $300$ OpenFlow rules, roughly double the baseline number.
\item \emph{Combination}. This policy augments the basic
  connectivity with a filter for web traffic, a bandwidth guarantee
  for certain traffic classes and an inspection policy for a certain
  class of hosts. This policy requires $23$ lines of Merlin code, but
  generates over $3000$ low-level instructions.
\end{enumerate}

The results of this experiment are depicted in
Figure~\ref{fig:stanford}. Overall, it shows that using Merlin
significantly reduces the effort, in terms of lines of code, required
to provision and configure network devices for a variety of real-world
management tasks.

\subsection{Application Performance}
\label{sec:management}

Our second experiment measured Merlin's ability to improve end-to-end
performance for real-world applications.

\paragraph*{Hadoop}
Hadoop is a popular open-source MapReduce~\cite{dean04}
implementation, and is widely-used for data analytics. A Hadoop
computation proceeds in three stages: the system (i) applies
a \emph{map} operator to each data item to produce a large set of
key-value pairs; (ii) \emph{shuffles} all data with a given key to a
single node; and (iii) applies the \emph{reduce} operator to values
with the same key. The many-to-many communication pattern used in the
shuffle phase often results in heavy network load, making Hadoop jobs
especially sensitive to background traffic. In practice, this
background traffic can come from a variety of sources. For example,
some applications use \textsc{udp}-based gossip protocols to update
state, such as system monitoring tools~\cite{astrolabe,gems}, network
overlay management~\cite{tman}, and even distributed storage
systems~\cite{astrolabe,dynamo}. A sensible network policy would be to
provide guaranteed bandwidth to Hadoop jobs so that they finish
expediently, and give the \textsc{udp} traffic only best-effort
guarantees.

With Merlin, we implemented this policy using just three policy
statements. To show the impact of the policy, we ran a Hadoop job that
sorts $10$GB of data, and measured the time to complete it on a
cluster of four servers, under three different configurations:
\begin{enumerate}\setlength{\itemsep}{0pt}
\item \emph{Baseline.} Hadoop had exclusive access to the network. 
\item \emph{Interference.} we used
the \texttt{\small iperf} tool to inject \textsc{udp} packets, simulating
background traffic.
\item \emph{Guarantees.} we again injected background traffic, 
but guaranteed $90$ percent of the capacity for Hadoop.
\end{enumerate}

The measurements demonstrate the expected results. With exclusive
network access, the Hadoop job finished in $466$ seconds. With
background traffic causing network congestion, the job finished in
$558$ seconds, a roughly $20$\% slow down. With the Merlin policy
providing bandwidth guarantees, the job finished in $500$ seconds,
corresponding to the $90$\% allocation of bandwidth. 

\begin{figure}[t]
\centerline{\begin{tabular}{c@{}c}
  \includegraphics[width=0.5\columnwidth]{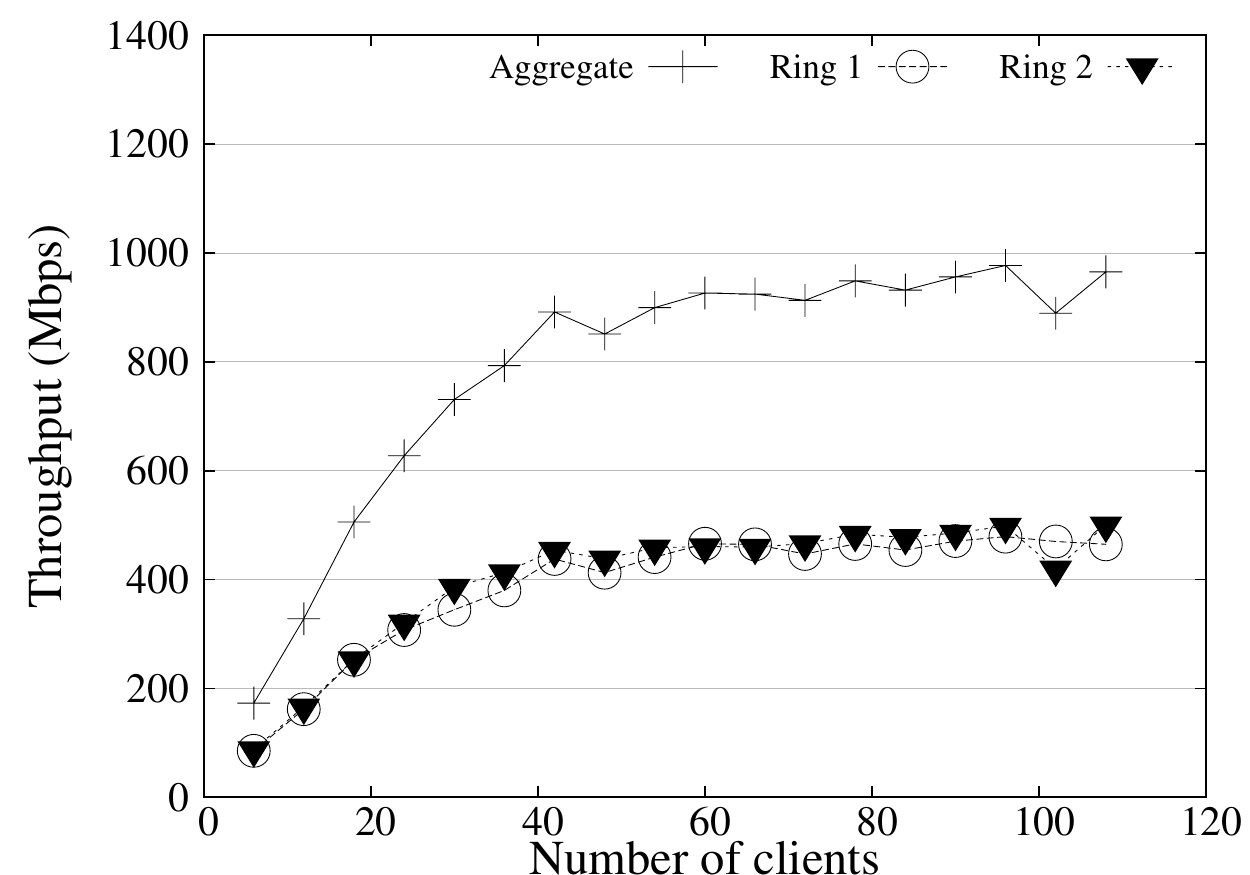} &
  \includegraphics[width=0.5\columnwidth]{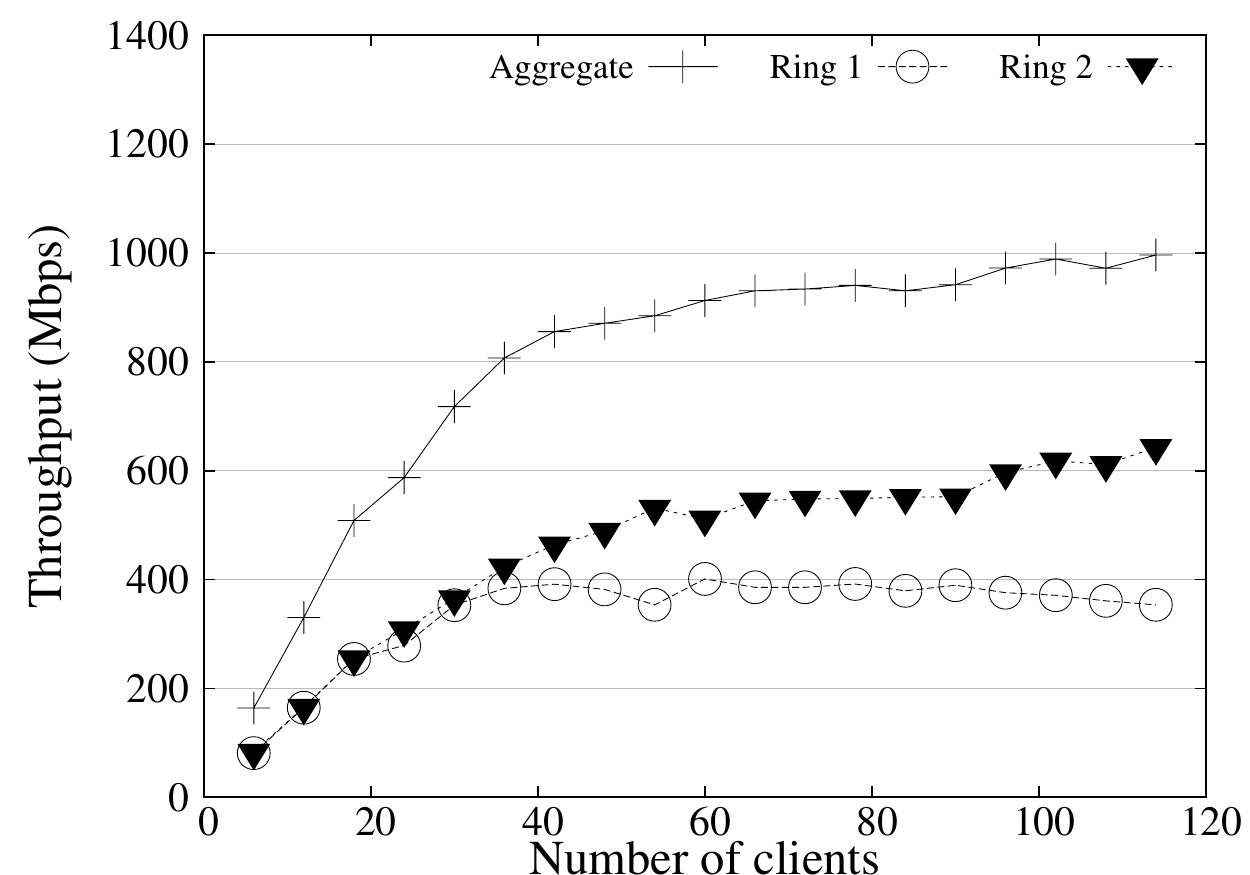} \\
(a) & (b) \\
\end{tabular}}
\vspace{-4mm}
\caption{Ring-Paxos (a) without and (b) with Merlin.}
\label{fig:ringpaxos}
\end{figure}

\paragraph*{Ring-Paxos}
State-machine replication (SMR) is a fundamental approach to designing
fault-tolerant services~\cite{lamport78,schneider90} at the core of
many current systems (e.g., Google’s Chubby~\cite{burrows06},
Scatter~\cite{glendenning11}, Spanner~\cite{corbett12}). State machine
replication provides clients with the abstraction of a highly
available service by replicating the servers and regulating how
commands are propagated to and executed by the replicas: (i) every
nonfaulty replica must receive all commands in the same order; and
(ii) the execution of commands must be deterministic.

Because ordering commands in a distributed setting is a non-negligible
operation, the performance of a replicated service is often determined
by the number of commands that can be ordered per time unit.  To
achieve high performance, the service state can be partitioned and
each partition replicated individually (e.g., by separating data from
meta-data), but the partitions will compete for shared resources
(e.g., common nodes and network links).

We assessed the performance of a key-value store service replicated
with state-machine replication.  Commands are ordered using an
open-source implementation of Ring Paxos~\cite{MPSP10}, a highly
efficient implementation of the Paxos protocol~\cite{lamport98}.  We deployed two instances
of the service, each one using four processes.  One process in each
service is co-located on the same machine and all other processes run
on different machines.  Clients are distributed across six different
machines and submit their requests to one of the services and receive
responses from the replicas.

Figure~\ref{fig:ringpaxos} (a) depicts the throughput of the two
services; the aggregate throughput shows the accumulated performance
of the two services.  Since both services compete for resources on the
common machine, each service has a similar share of the network, the
bottlenecked resource at the common machine.  In
Figure~\ref{fig:ringpaxos} (b), we provide a bandwidth guarantee for
Service 2.  Note that this guarantee does not come at the expense of
utilization. If Service 2 stops sending traffic, Service 1 is free to use
the available bandwidth.

\paragraph*{Summary}
Overall, these experiments show that Merlin policies can concisely
express real-world policies, and that the Merlin system is able to
generate code that achieves the desired outcomes for applications on
real hardware.

\begin{figure}[t]
\centerline{\includegraphics[trim=0mm 1cm 0mm 6.5cm, clip=true, width=0.4\textwidth]{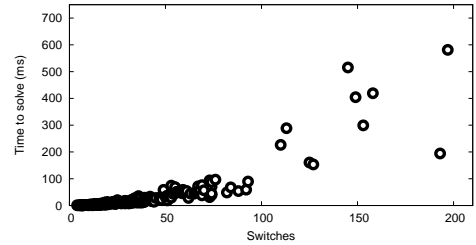}}
\vspace{-7mm}
 \caption{Compilation times for Internet Topology Zoo.}
 \label{fig:zoo}
\end{figure}

\begin{figure*}
\centerline{\begin{tabular}[h]{|c|c|c|c|c|c|}
\hline
\textbf{Traffic Classes} & 
\textbf{Hosts} & 
\textbf{Switches} & 
\textbf{LP construction (ms)} & 
\textbf{LP solution (ms)} & 
\textbf{Rateless solution (ms)}\\
\hline
%380 & 20 & 45 & 22 & 19 & 32 \\
870 & 30 & 45 & 25 & 22 & 33 \\
%1560 & 40 & 45 & 83 & 54 & 9 \\
%2450 & 50 & 45 & 91 & 59 & 9 \\
%3540 & 60 & 80 & 161 & 109 & 68 \\
%4830 & 70 & 80 & 172 & 116 & 67 \\
%6320 & 80 & 80 & 179 & 122 & 29 \\
8010 & 90 & 80 & 214 & 160 & 36 \\
%9900 & 100 & 80 & 197 & 133 & 29 \\
%11990 & 110 & 80 & 205 & 140 & 29 \\
%14280 & 120 & 80 & 66 & 45 & 35 \\
%16770 & 130 & 125 & 104 & 70 & 126 \\
%19460 & 140 & 125 & 107 & 73 & 127 \\
%22350 & 150 & 125 & 110 & 77 & 127 \\
%25440 & 160 & 125 & 113 & 78 & 127 \\
28730 & 170 & 125 & 364 & 252 & 106 \\
%32220 & 180 & 125 & 747 & 651 & 87 \\
39800 & 200 & 125 & 1465 & 1485 & 91 \\
%43890 & 210 & 125 & 1487 & 1536 & 88 \\
%48180 & 220 & 125 & 2167 & 3127 & 85 \\
%62250 & 250 & 180 & 4490 & 8793 & 210 \\
%83810 & 290 & 180 & 10334 & 114906 & 216 \\
95790 & 310 & 180 & 13287 & 248779 & 222 \\
%108570 & 330 & 180 & 19277 & 581246 & 213 \\
136530 & 370 & 180 & 27646 & 1200912 & 215 \\
%144020 & 380 & 180 & 28039 & 1243382 & 210 \\
%151710 & 390 & 180 & 28491 & 2295179 & 217 \\
159600 & 400 & 180 & 29701 & 1351865 & 212 \\
%167690 & 410 & 180 & 29104 & 1353411 & 219 \\
%175980 & 420 & 180 & 48164 & 1842908 & 215 \\
%184470 & 430 & 180 & 51193 & 2427969 & 218 \\
229920 & 480 & 245 & 86678 & 10476008 & 451\\
\hline
\end{tabular}
}
\caption{Number of traffic classes,  
  topology sizes, and solution times for fat tree topologies with 5\%
  of the traffic classes with guaranteed bandwidth.}
\label{table:classes}
\end{figure*}

\begin{figure*}[t]
\centerline{\begin{tabular}{c@{}c@{}c@{}c}
   \includegraphics[trim=0mm 4cm 0mm 0mm, clip=true, width=0.25\textwidth]{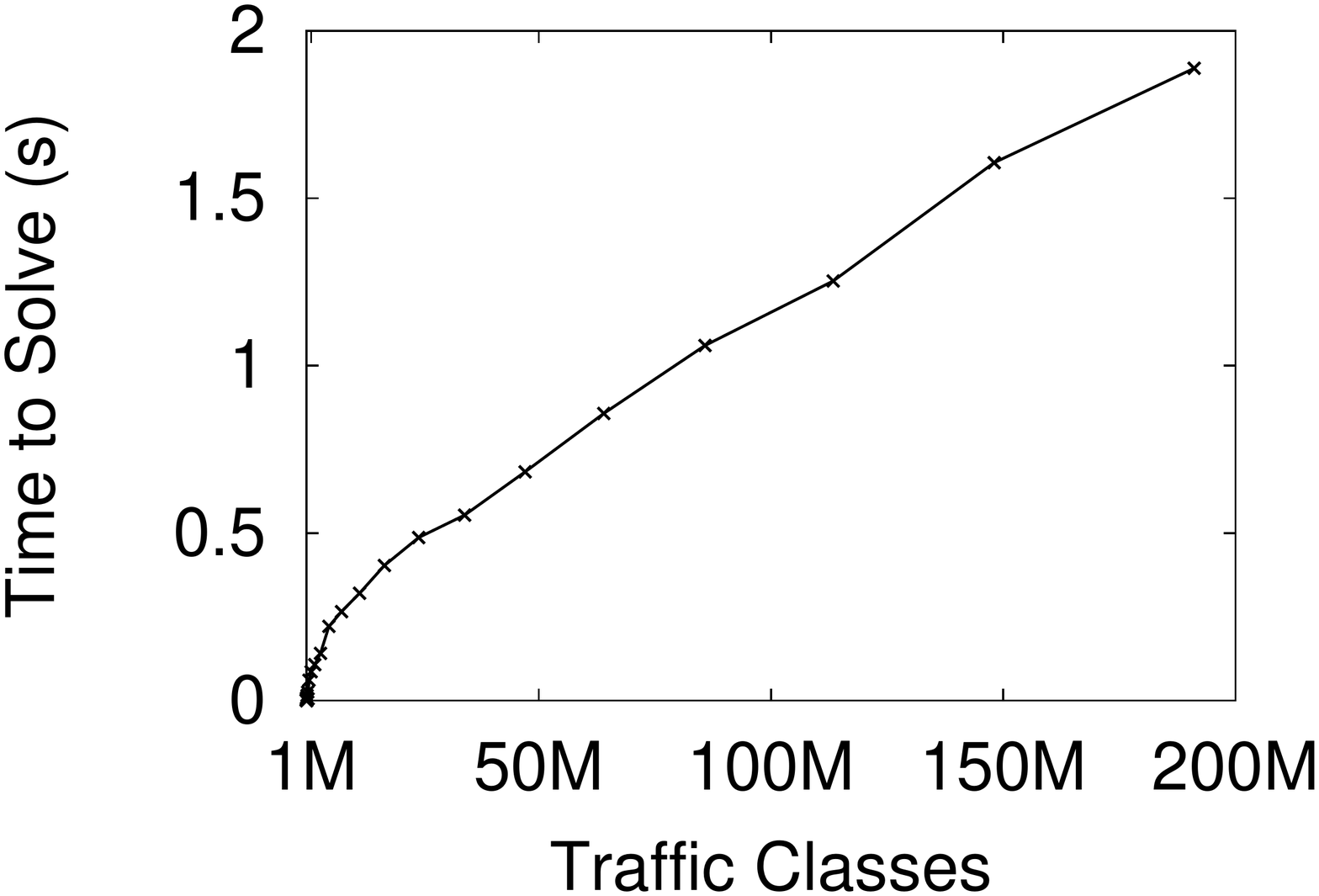} &
   \includegraphics[trim=0mm 4cm 0mm 0mm, clip=true, width=0.25\textwidth]{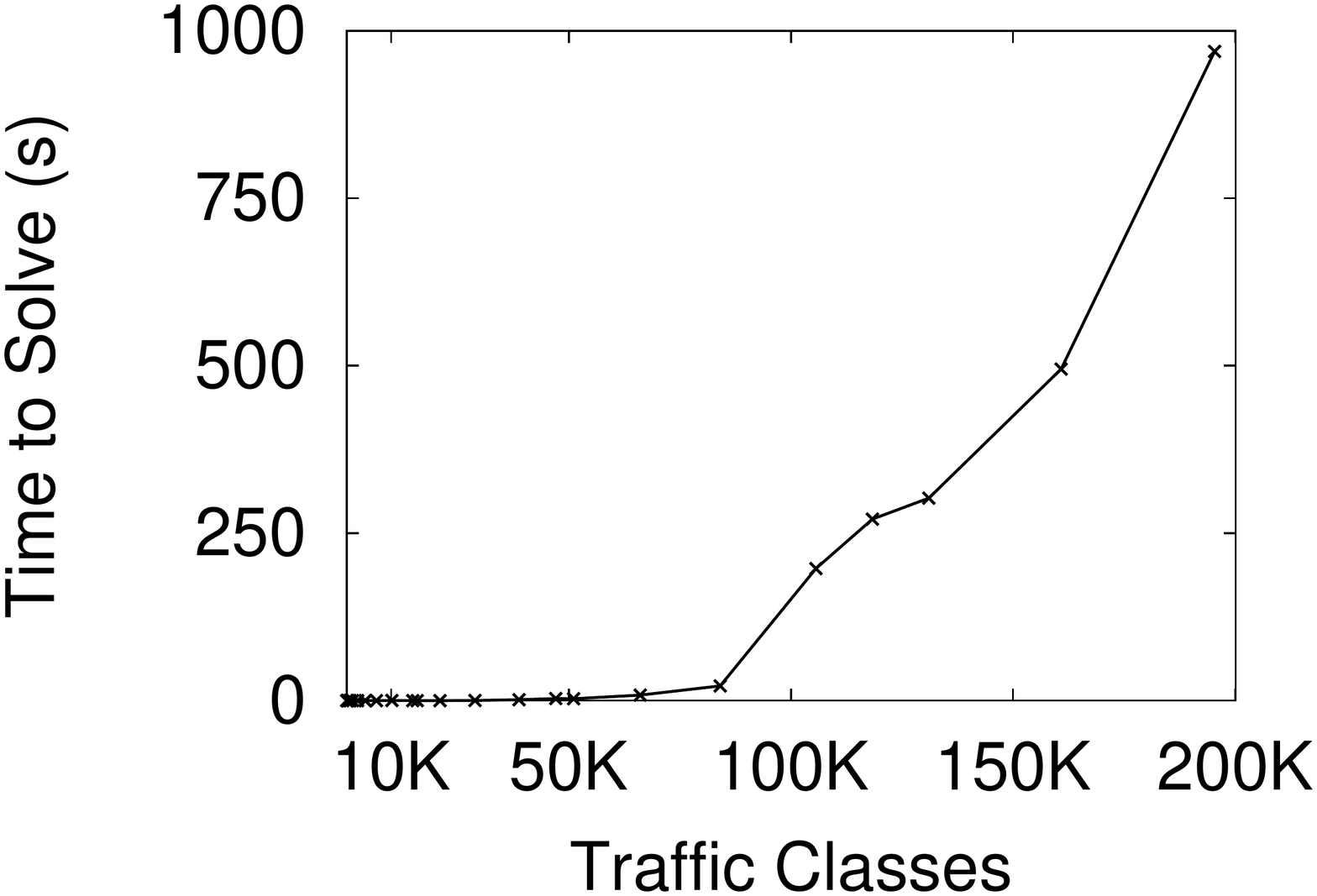} &
   \includegraphics[trim=0mm 4cm 0mm 0mm, clip=true, width=0.25\textwidth]{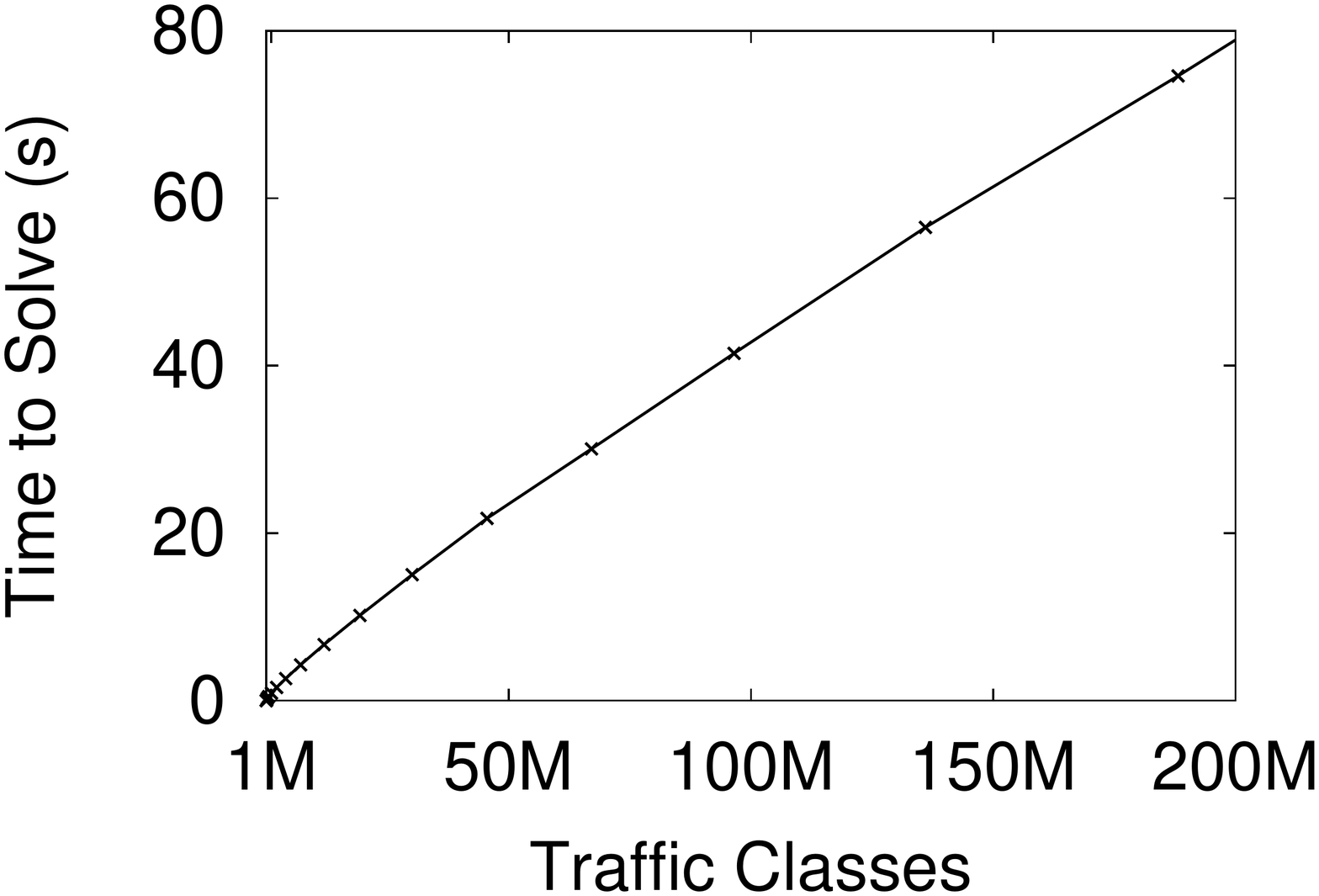} &
   \includegraphics[trim=0mm 4cm 0mm 0mm, clip=true, width=0.25\textwidth]{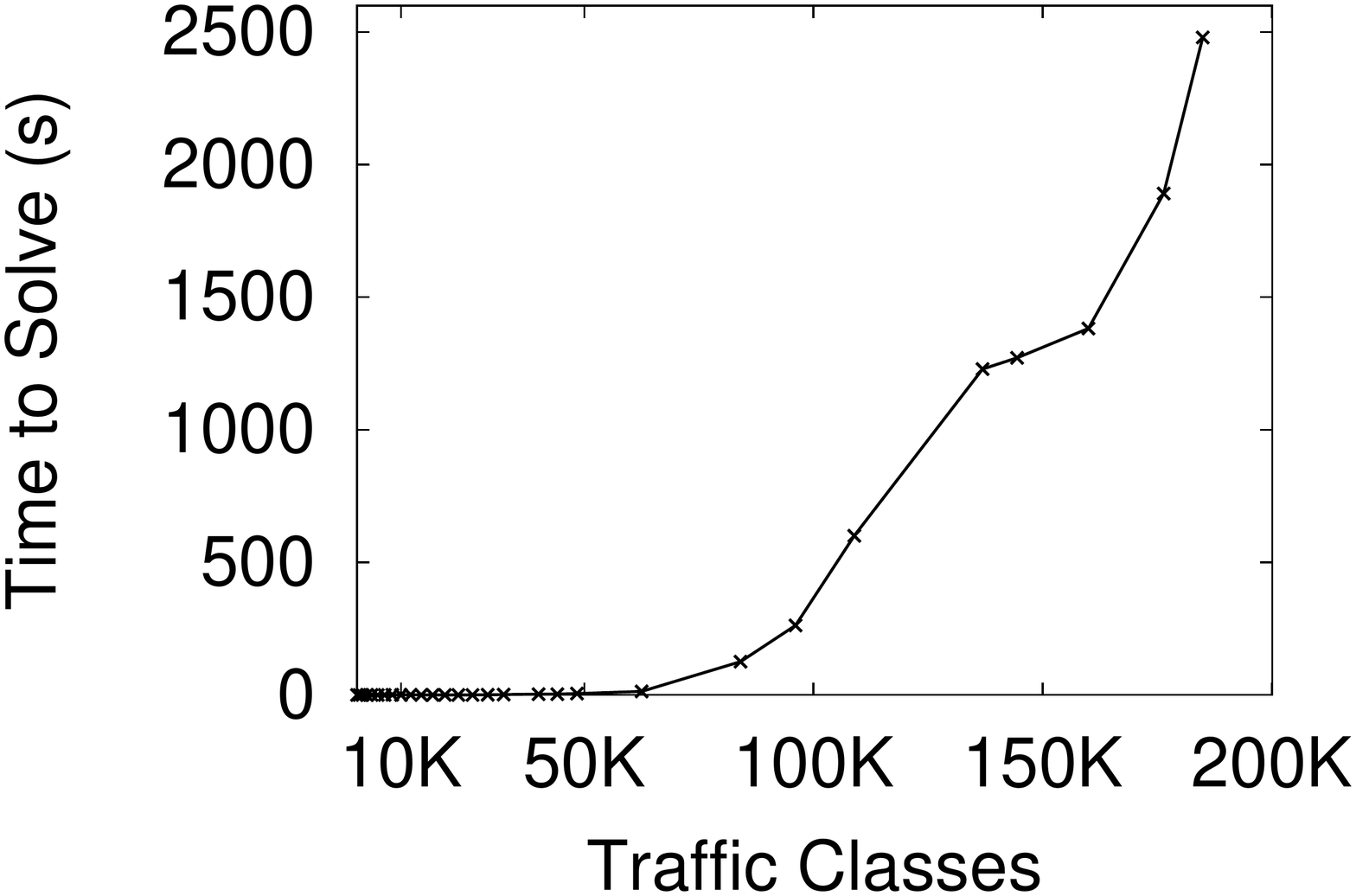} \\
(a) & (b) & (c) & (d) \\
\end{tabular}}
\caption{Compilation times for an increasing number of traffic classes for (a) all pairs
  connectivity on a balanced tree, (b) 5\% of the traffic with guaranteed priority
   on a balanced tree, (c) all pairs connectivity on a fat tree, (d) 5\% of the traffic with guaranteed priority
   on a fat tree.}
\label{fig:scalability}
\end{figure*}

\subsection{Compilation and Verification}

The scalability of the Merlin compiler and verification framework
depend on both the size of the network topology and the number of
traffic classes. Our third experiment evaluates the scalability of
Merlin under a variety of scenarios. 

\paragraph*{Compiler}
We first measured the compilation time needed by Merlin to provide
pair-wise connectivity between all hosts in a topology. This task,
which could be computed offline, has been used to evaluate other
systems, including VMware's NSX, which reports approximately $30$
minutes to achieve $100$\% connectivity from a cold
boot~\cite{koponen14}. We used the Internet Topology Zoo~\cite{zoo}
dataset, which contains $262$ topologies that represent a large
diversity of network structures. The topologies have average size of 40
switches, with a standard deviation of 30 switches. The Merlin
compiler takes less than 50ms to provide connectivity for the majority
of topologies, and less than 600ms to provide connectivity for all but one of the
topologies. Figure~\ref{fig:zoo} shows the results for $261$ of the
topologies. To improve the readability of the graph, we elided the
largest topology, which has $754$ switches and took Merlin $4$ seconds
to compile.

To explore the scalability with bandwidth guarantees, we measured the
compilation time on two types of tree topologies, balanced and fat trees, for an increasing
number of traffic classes. Each traffic class represents a
unidirectional stream going from one host at the edge of the network
to another. Thus, the total number of classes correspond to the number
of point-to-point traffic flows. Table \ref{table:classes} shows a
sample of topology sizes and solution times for various traffic
classes for fat tree topologies. For both types of topologies,
we took two sets of measurements: the
time to provide pair-wise connectivity with no guarantees, and the
time to provide connectivity when 5\% of the traffic classes receive
guarantees. Figure~\ref{fig:scalability} shows the results. As
expected, providing bandwidth guarantees adds overhead to the
compilation time.  For the worst case scenario that we measured, on a
network with $400,000$ total traffic classes, with $20,000$ of those
classes receiving bandwidth guarantees, Merlin took around 41 minutes
to find a solution. To put that number in perspective,
B4~\cite{jain13} only distinguishes $13$ traffic classes. Merlin finds
solutions for $100$ traffic classes with guarantees in a network with
$125$ switches in less than $5$ seconds.

These experiments show that Merlin can provide connectivity for large
networks quickly and our mixed-integer programming approach used for
guaranteeing bandwidth scales to large networks with reasonable
overhead.

\begin{figure*}[t]
\centerline{\begin{tabular}{c@{}c@{}c}
  \includegraphics[width=0.3\textwidth]{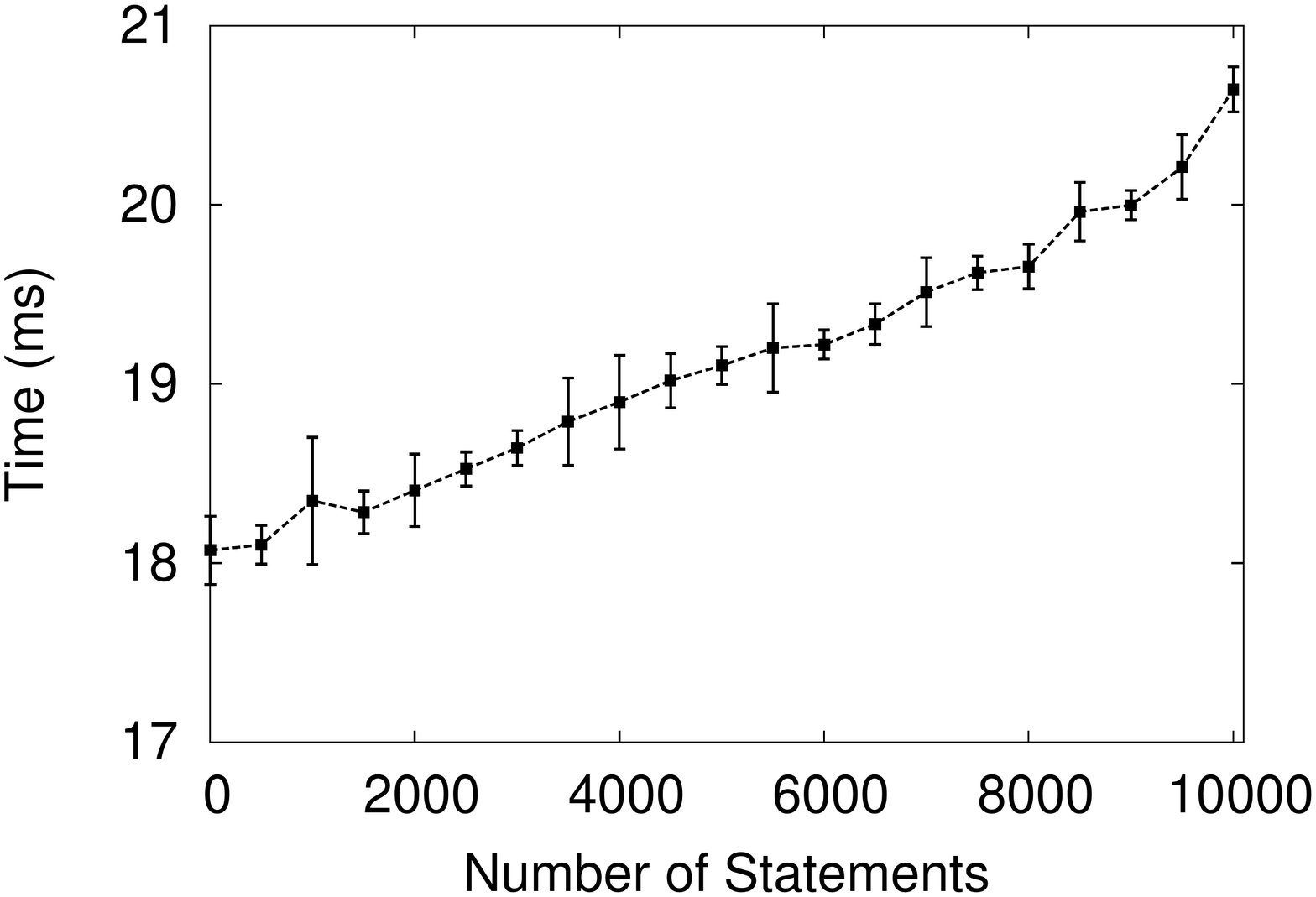} &
  \includegraphics[width=0.3\textwidth]{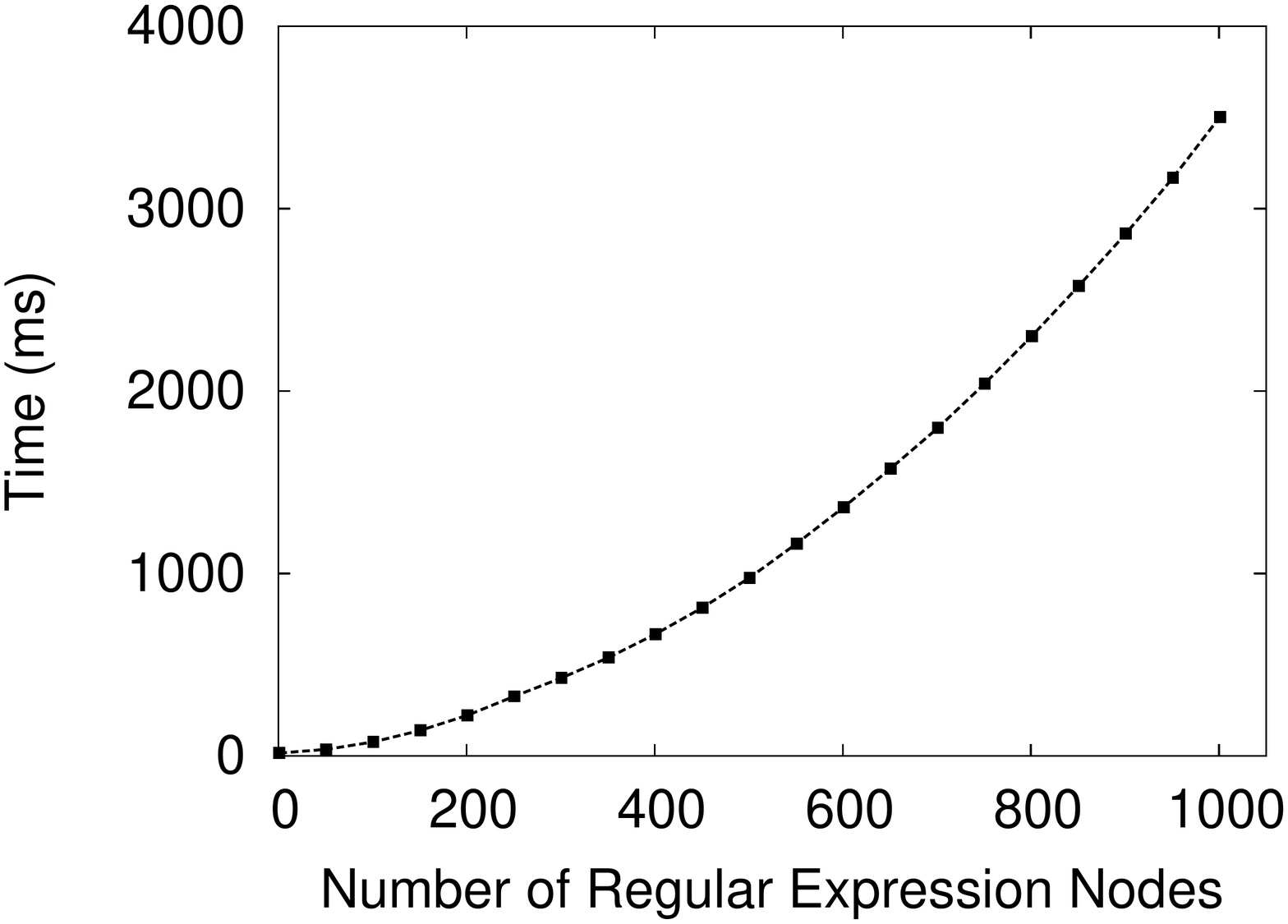} &
  \includegraphics[width=0.3\textwidth]{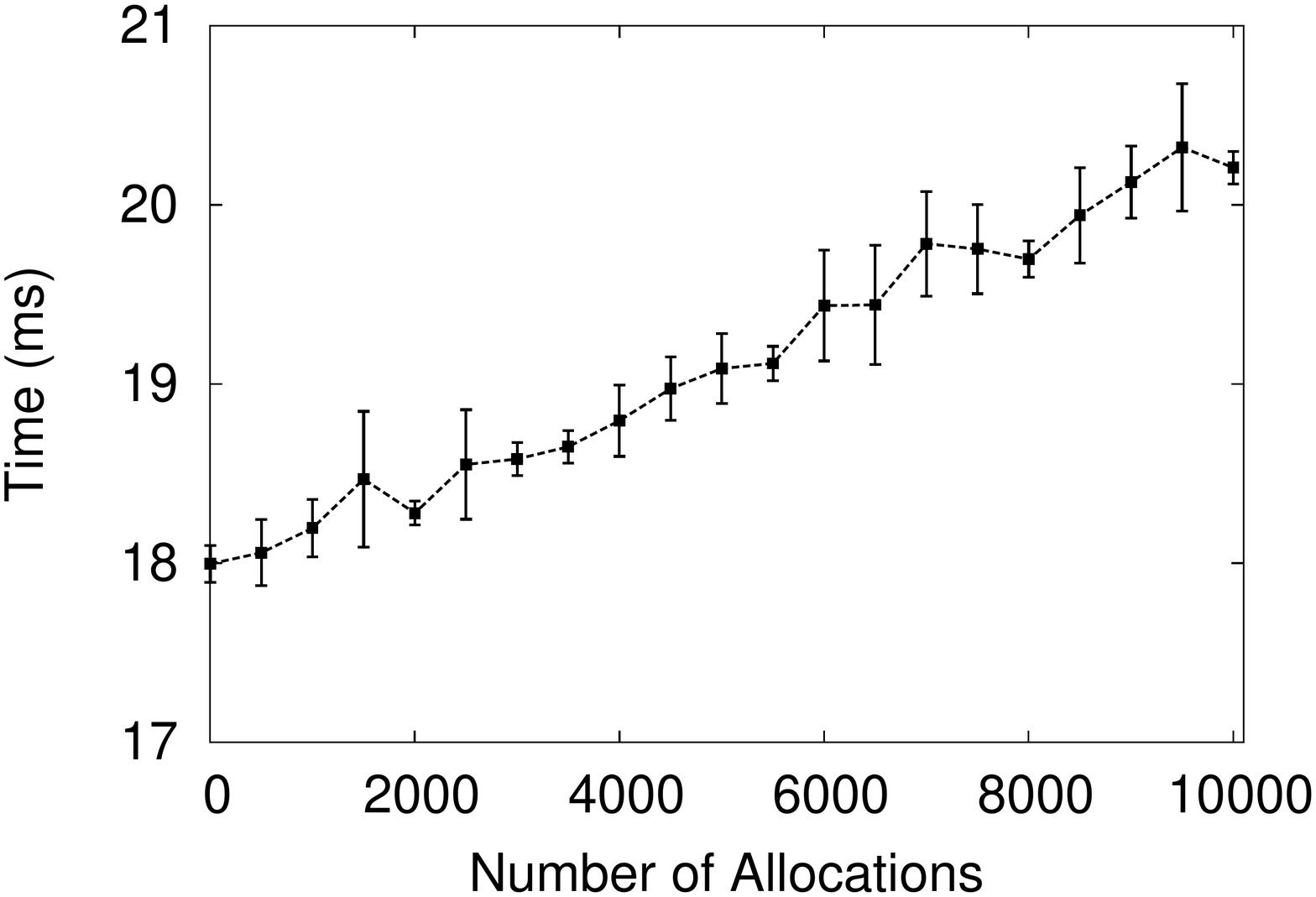} \\
\end{tabular}}
\vspace{-5mm}
\caption{Time taken to verify a delegated policy for an increasing number of
  delegated predicates, increasingly complex regular expressions, and an
  increasing number of bandwidth allocations.}
\label{fig:verify}
\end{figure*}

\paragraph*{Verifying negotiators}
Delegated Merlin policies can be modified by negotiators in three
ways: by changing the predicates, the regular expressions, or the
bandwidth allocations. We ran three experiments to benchmark our
negotiator verification runtime for these cases. First, we increased
the number of additional predicates generated in the delegated
policy. Second, we increased the complexity of the regular expressions
in the delegated policy. The number of nodes in the regular
expression's abstract syntax tree is used as a measure of its
complexity. Finally, we increased the number of bandwidth allocations
in the delegated policy. For all three experiments, we measured the
time needed for negotiators to verify a delegated policy against the
original policy. We report the mean and standard deviation over ten
runs.

The results, shown in Figure \ref{fig:verify}, demonstrate that policy
verification is extremely fast for increasing predicates and
allocations. Both scale linearly up to tens of thousands of
allocations and statements and complete in milliseconds. This shows
that Merlin negotiators can be used to rapidly adjust to changing
traffic loads. Verification of regular expressions has higher
overhead. It scales quadratically, and takes about $3.5$ seconds for
an expression with a thousand nodes in its parse tree. However, since
regular expressions denote paths through the network, it is unlikely
that we will encounter regular expressions with thousands of nodes in
realistic deployments. Moreover, we expect path constraints to change
relatively infrequently compared to bandwidth constraints.

\paragraph*{Dynamic adaptation}
Merlin negotiators support a wide range of resource management
schemes. We implemented two common
approaches: \emph{additive-increase, multiplicative decrease} (AIMD),
and \emph{max-min fair-sharing} (MMFS).  With AIMD, tenants adjust
resource demands by incrementally trying to increasing their
allocation.  With MMFS, tenants declare resource requirements ahead of
time. The negotiator attempts to satisfy demands starting with the
smallest. Remaining bandwidth is distributed among all
tenants. Figure~\ref{fig:adaptation} (a) shows the bandwidth usage
over time for two hosts using the AIMD strategy.
Figure~\ref{fig:adaptation} (b) shows the bandwidth usage over time
for four hosts using the MMFS negotiators. Host \texttt{\small h1}
communicates with \texttt{\small h2}, and \texttt{\small h3}
communicates with \texttt{\small h4}.  Both graphs were generated
using our hardware testbed. Overall, the negotiators allow the network
to quickly adapt to changing resource demands, while respecting the
global constraints imposed by the policy.

\begin{figure}[t]
\centerline{\begin{tabular}{c@{}c}
  \includegraphics[trim=0mm 3.5cm 0mm 0mm, clip=true, width=0.5\columnwidth]{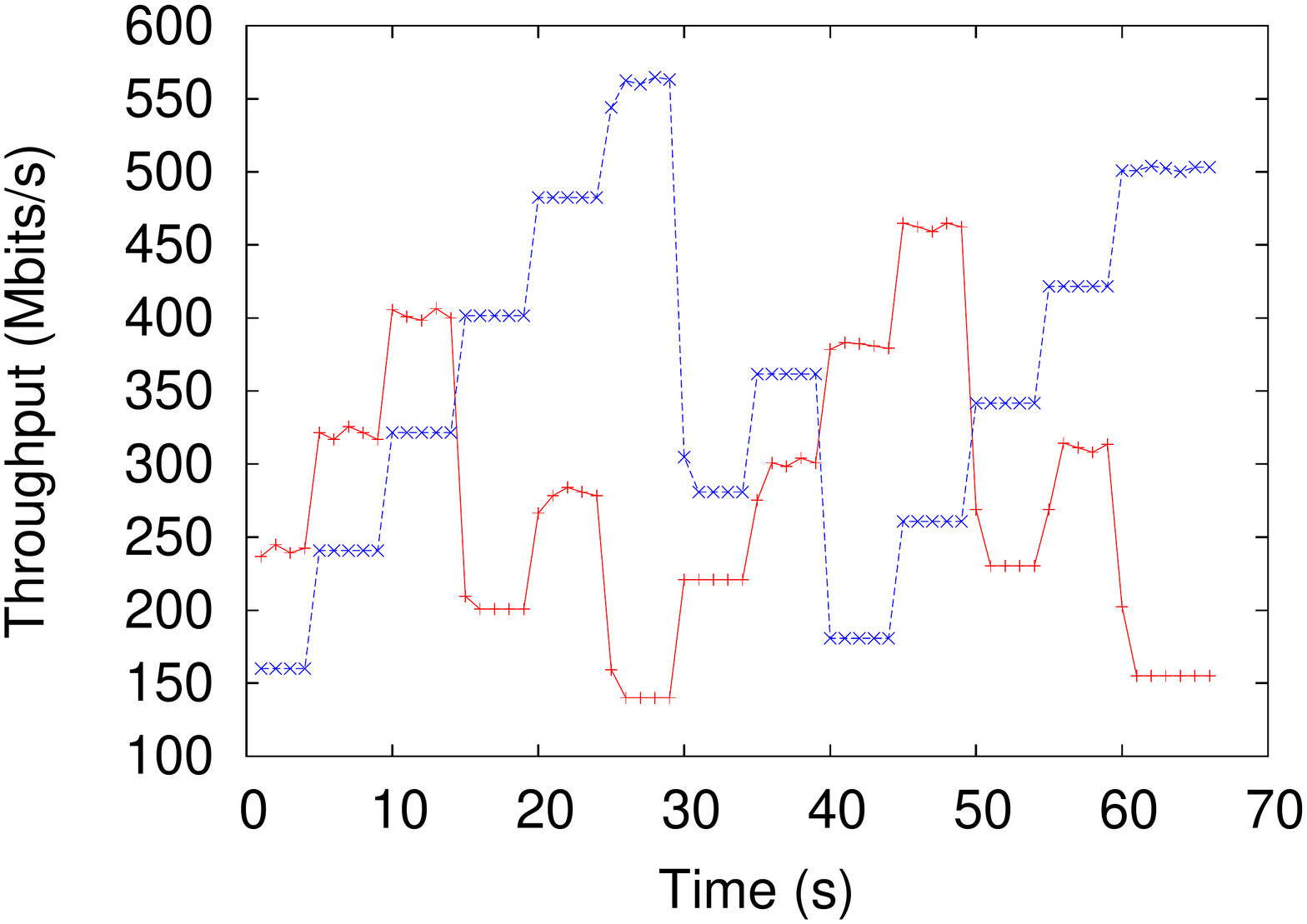} &
  \includegraphics[trim=0mm 3.5cm 0mm 0mm, clip=true, width=0.5\columnwidth]{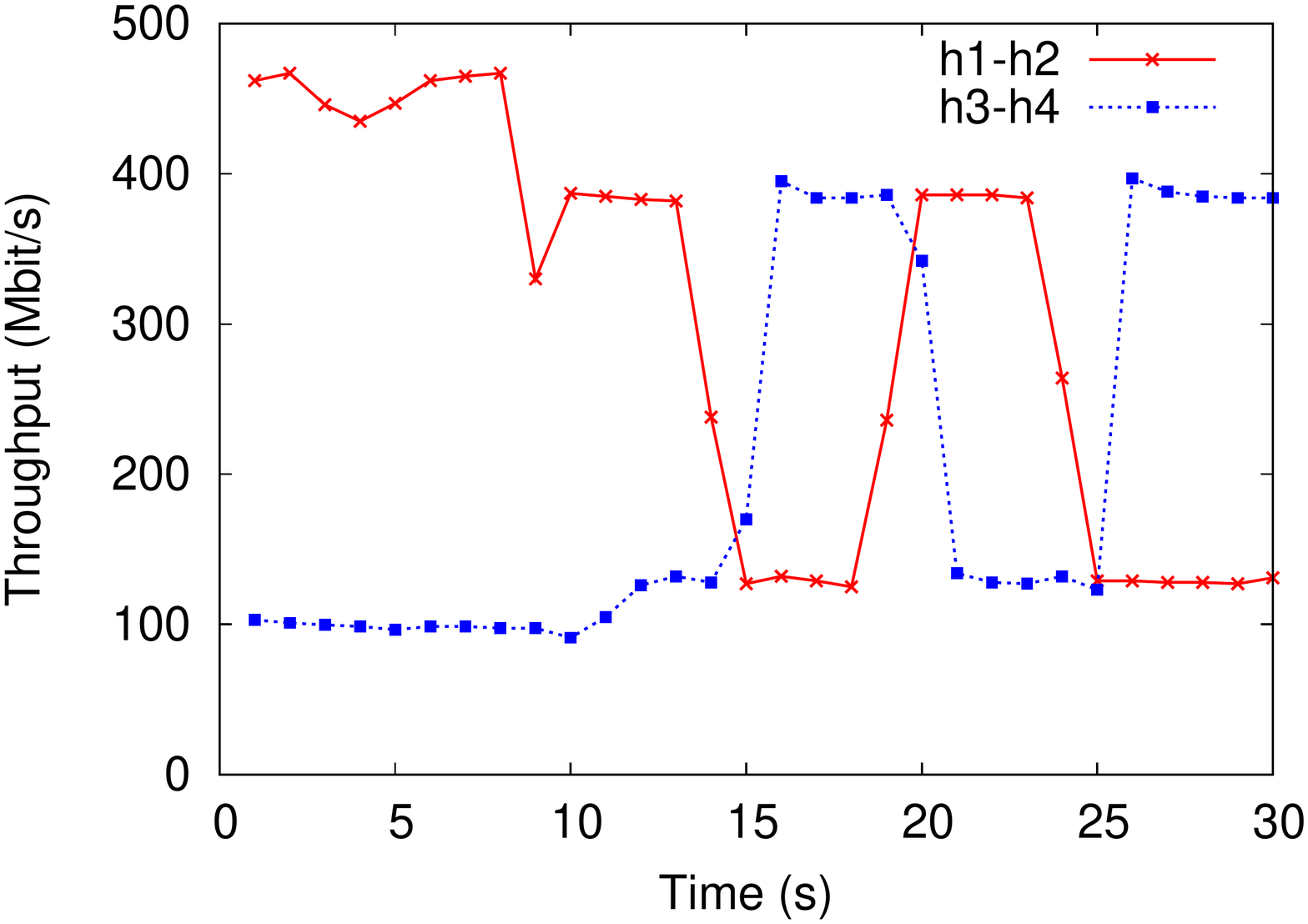} \\
(a) & (b) \\
\end{tabular}}
\vspace{-3mm}
\caption{(a) AIMD and (b) MMFS dynamic adaptation.}
\label{fig:adaptation}
\end{figure}

\section{Related Work}

In prior work~\cite{soule13}, we presented a preliminary design for
Merlin including sketching the encoding of path selection as a
constraint problem, and presenting ideas for language-based delegation
and verification. This paper expands our eariler work with a complete
description of Merlin's design and implementation and an experimental
evaluation.

A number of systems in recent years have investigated mechanisms for
providing bandwidth caps and guarantees~\cite{ballani11,shieh10b,popa12,jeyakumar13}, implementing
traffic filters~\cite{ioannidis00,roesch99}, or specifying forwarding
policies at different points in the
network~\cite{foster11,godfrey09,monsanto12,hinrichs09}. Merlin builds
on these approaches by providing a unified interface and central point
of control for switches, middleboxes, and end hosts.

\textsc{simple}~\cite{qazi13} is a framework for controlling
middleboxes. \textsc{simple} attempts to load balance the network with respect
to \textsc{tcam} and \textsc{cpu} usage. Like Merlin, it solves an optimization
problem, but it does not specify the programming interface to the framework, or
how policies are represented and analyzed. \textsc{aplomb}~\cite{sherry12} is a
system that allows middleboxes to be obtained from a cloud service. Merlin is
designed to provide a programming abstraction for the whole network, of which a
middlebox is just one kind of network element. It would be interesting to extend
Merlin with a back-end that supports cloud-hosted middleboxes as in
\textsc{aplomb}.

Many different programming languages have been proposed in recent
years including Frenetic~\cite{foster11}, Pyretic~\cite{monsanto13},
and Maple~\cite{voellmy13}. These languages typically offer
abstractions for programming OpenFlow networks. However, these
languages are limited in that they do not allow programmers to specify
middlebox functionality, allocate bandwidth, or delegate policies. An
exception is the \textsc{pane}~\cite{ferguson13} system, which allows
end hosts to make explicit requests for network resources like
bandwidth. Unlike Merlin, \textsc{pane} does not provide mechanisms
for partitioning functionality across a variety of devices and
delegation is supported at the level of individual network flows,
rather than entire policies.

The Merlin compiler implements a form of program partitioning. This
idea has been previously used in other domains including secure
web applications~\cite{chong07}, and distributed computing and
storage~\cite{liu09}.

\section{Conclusion}

The success of programmable network platforms has demonstrated the
benefits of high-level languages for managing networks.  Merlin
complements these approaches by further raising the level of
abstraction. Merlin allows administrators to specify the functionality
of an entire network, leaving the low-level configuration of
individual components to the compiler. At the same time, Merlin
provides tenants with the freedom to tailor policies to their
particular needs, while assuring administrators that the global
constraints are correctly enforced. Overall, this approach
significantly simplifies network administration, and lays a solid
foundation for a wide variety of future research on network
programmability.

{\small
\bibliographystyle{abbrv}
\balance
\bibliography{main}
}

\end{document}